\newcommand{\blind}{0}
\newtheorem{remark}{Remark}
\newtheorem{theorem}{Theorem}
\newtheorem{lemma}{Lemma}
\begin{document}

\def\spacingset#1{\renewcommand{\baselinestretch}%
{#1}\small\normalsize} \spacingset{1}


\if0\blind
{
  \title{\bf Inference for Joint Quantile and Expected Shortfall Regression}
  \author{Xiang Peng
  \hspace{.2cm}\\
    Department of Statistics, George Washington University\\
    and \\
    Huixia Judy Wang \\
    Department of Statistics, George Washington University}
  \maketitle
} \fi

\if1\blind
{
  \bigskip
  \bigskip
  \bigskip
  \begin{center}
    {\LARGE\bf Title}
\end{center}
  \medskip
} \fi

\bigskip
\begin{abstract}
	Quantiles and expected shortfalls are commonly used risk measures in financial risk management. The two measurements are correlated while have distinguished features. In this project, our primary goal is to develop stable and practical inference method for conditional expected shortfall. To facilitate the statistical inference procedure, we consider the joint modeling of conditional quantile and expected shortfall. While the regression coefficients can be estimated jointly by minimizing a class of strictly consistent joint loss functions, the computation is challenging especially when the dimension of parameters is large since the loss functions are neither differentiable nor convex. To reduce the computational effort, we propose a two-step estimation procedure by first estimating the quantile regression parameters with standard quantile regression. We show that the two-step estimator has the same asymptotic properties as the joint estimator, but the former is numerically more efficient. We further develop a score-type inference method for hypothesis testing and confidence interval construction. Compared to the Wald-type method, the score method is robust against heterogeneity and is superior in finite samples, especially for cases with a large number of confounding factors. We demonstrate the advantages of the proposed methods over existing approaches through numerical studies.
\end{abstract}

\noindent%
{\it Keywords: expected shortfall, quantile, score-type inference, two-step estimation} 

\spacingset{1.45}
\title{}
\section{Introduction}\label{sec:intro_ES}
A tail quantile of the profit-and-loss distribution measures the risk of loss for investments, which is known as Value-at-Risk (VaR) and has
been widely applied for capital allocation and risk management over the past two decades \citep{mcneil2015quantitative}. Although it is an intuitive measure, VaR has been criticized since it fails to capture tail risks beyond itself. Expected Shortfall (ES), defined as the average above or below a certain quantile, fulfills such deficiency as it better characterizes the tail behavior by consolidating information from the entire tail region. In addition, ES has the desired property of subadditivity which VaR lacks in general \citep{artzner1997thinking, artzner1999coherent}.
With these appealing features, ES has attracted increasing attentions and has been more widely applied for risk management in recent years. In financial risk management, the Basel Committee recently to shift the quantitative risk metrics system from VaR to ES \citep{committee_2013}.

In many applications, risk measures might depend on exogenous covariates. For instance, market risks are often change across investment conditions, such as macroeconomic, financial, and political environments. For most clinical studies, patient outcomes are usually associated with demographic and therapeutic information. It is thus of interest to focus on the conditional risk measures adjusting for certain covariates. In this project, we consider the problem of inference for conditional expected shortfall (CES).

\cite{he2010detection} introduced a COVariate-adjusted Expected Shortfall (\textit{COVES}) test to detect treatment effects through CES, which is motivated by a clinical study with balanced design. However, as shown in Sections \ref{sec:simulation_ES} and \ref{sec:real_data_ES},
the statistical power of the \textit{COVES} test may be affected when there are unbalanced covariates. 
To evaluate CES beyond the scope of treatment differences, we consider inference based on a regression framework. Nevertheless, as pointed out by \cite{gneiting2011making}, CES is not ``elicitable" in the sense that it cannot be represented as the minimizer of an expected loss, and hence the stand-alone regression for CES is infeasible.
To overcome the problem of ``elicitability," \cite{leorato2012asymptotically} and \cite{peracchi2008estimating} suggested to approximate CES by fitting an entire quantile process, which imposes both computational and theoretical challenges. Alternative methods such as those proposed by \cite{cai2008nonparametric}, \cite{kato2012weighted} and \cite{xiao2014right} rely on kernel-smoothing estimation for the conditional distribution function, which are subject to the ``curse-of-dimensionality" and practically feasible only for data with a few covariates. 

For inference on CES, we consider an alternative approach through the joint modeling of conditional quantile and CES. \cite{fissler2016higher} recently showed that VaR and ES are jointly ``elicitable," and they provided a class of strictly consistent joint loss functions for the pairs of quantile and ES at the same probability level. \cite{dimitriadis2019joint} utilized the joint loss functions in a regression setup for quantile and ES. The computation of the joint estimator is challenging especially when the dimension of parameters is large since the joint loss function is neither differentiable nor convex. To reduce the computational effort, we propose a two-step estimation procedure. We first estimate the quantile parameters with standard quantile regression \citep{koenkerquantile}, and then estimate the ES regression coefficients bt minimizing the simplified objective function with the quantile estimators plugged in. We show that the two-step estimator has the same asymptotic properties as the joint estimator, but the former is numerically more efficient. In addition, the CES estimation in the second step is locally robust to the quantile estimation in the first step, which implies that the local misspecification of the quantile parameters has no effect on the asymptotic distribution of the ES estimator; see \cite{chernozhukov2016locally} for an elaboration on local robustness.

The Wald-type inference method can be conducted based on the asymptotic distribution of the parameter estimator.
However, it has been shown in quantile regression literature that the Wald-type test is generally unstable for small sample sizes, partly due to the uncertainty from estimating nuisance parameters involved in the asymptotic variance, such as the conditional densities of the response \citep{chen2005computational, kocherginsky2005practical}. We develop a score-type inference method for hypothesis testing and confidence interval construction. Numerical studies suggest that the proposed score-type method is superior to the Wald-type method in finite samples, especially when the data is heterogeneous and involves a large number of confounding factors. Furthermore, the method 
provides more accurate results than the \textit{COVES} approach for unbalanced design. 

In Section \ref{sec:proposed_method_ES}, we first present the two-step estimation procedure for the joint regression model and the large sample properties of the resulting estimators, and then develop the score-type inference method for the ES regression parameters. We assess the finite sample performance of the proposed inference procedure with simulation studies in Section \ref{sec:simulation_ES}. The merit of the proposed method is illustrated by analyzing two real data sets in Section \ref{sec:real_data_ES}. Some concluding remarks are provided in Section \ref{sec:conclusion_ES}. Proofs are deferred to the Appendix.

\section{Proposed Method}\label{sec:proposed_method_ES}

\subsection{Joint regression model}\label{subsec:joint_reg_model_ES}

Consider a continuous response $Y$ and a $p$-dimensional design vector $\mathbf{X}$. At a given probability level $\tau \in (0, 1)$, the conditional quantile of $Y$ given $\mathbf{X}$ is defined as
$$
Q_\tau(Y | \mathbf{X}) = \inf\{y \in \mathbb{R}: F^{-1}_{Y|\mathbf{X}}(y) \ge \tau\},
$$ 
where $F_{Y|\mathbf{X}}$ is the conditional distribution function of $Y$ given $\mathbf{X}$. The corresponding CES is defined as $$
ES_\tau(Y|\mathbf{X}) = \tau^{-1} \int_0^\tau F^{-1}_{Y|\mathbf{X}} (u) du,
$$ 
which is deemed to be more informative than the conditional quantile as CES summarizes the entire tail behavior of the conditional distribution.
In this project, we are interested in the inference for the CES of $Y$ given $\mathbf{X}$ at a certain probability level.
To evaluate CES with a wide range of applications, we consider inference based on a regression framework.
However, as pointed out by \cite{gneiting2011making}, the stand-alone regression for CES is infeasible since it cannot be represented as the minimizer of an expected loss. To overcome this problem, we adopt the idea in \citet{fissler2016higher} and employ a joint regression framework that simultaneously models the conditional quantile and CES.

For ease of presentation, let $\mathbf{X}$ denote the design vectors for both quantile and ES regression models, but we should bear in mind that one can consider different design vectors for these two models.
For a fixed probability level $\tau \in (0, 1)$, we jointly model the conditional quantile and CES of $Y$ given $\mathbf{X}$ as
\begin{align*} 
Q_\tau(Y | \mathbf{X}) = \mathbf{X}^\prime \boldsymbol{\theta}^q_0 \quad
\text{and} \quad ES_\tau(Y | \mathbf{X}) = \mathbf{X}^\prime \boldsymbol{\theta}^e_0,
\end{align*}
where the parameter vector $\boldsymbol{\theta}_0 = (\boldsymbol{\theta}^{q \prime}_0, \boldsymbol{\theta}^{e \prime}_0)^\prime$ is $\tau$-specific. Denote $u^q(\tau) = Y - Q_\tau(Y | \mathbf{X})$ and $u^e(\tau) = Y - ES_\tau(Y | \mathbf{X})$, we assume $Q_\tau(u^q | \mathbf{X}) = ES_\tau(u^e | \mathbf{X}) = 0$ for identifiability purpose.

To obtain the estimated regression coefficients, we utilize the class of strictly consistent joint loss functions for the pair of quantile and ES \citep{fissler2016higher},
\begin{align}\label{equ:joint_loss}
\rho_\tau(Y, \mathbf{X}, \boldsymbol{\theta}) &= \big\{I(Y \le \mathbf{X}^\prime \boldsymbol{\theta}^q) - \tau\big\} \cdot G_1(\mathbf{X}^\prime \boldsymbol{\theta}^q) - I(Y \le \mathbf{X}^\prime \boldsymbol{\theta}^q) \cdot G_1(Y) \nonumber \\
&+ G_2(\mathbf{X}^\prime \boldsymbol{\theta}^e) \cdot \left\{\mathbf{X}^\prime \boldsymbol{\theta}^e - \mathbf{X}^\prime \boldsymbol{\theta}^q + \frac{(\mathbf{X}^\prime \boldsymbol{\theta}^q - Y) I(Y \le \mathbf{X}^\prime \boldsymbol{\theta}^q)}{\tau}\right\} \\
&- \mathcal{G}_2(\mathbf{X}^\prime \boldsymbol{\theta}^e) + a(Y), \nonumber
\end{align}
where $G_1$ is an increasing and twice continuously differentiable function, $\mathcal{G}_2$ is a three-times continuously differentiable function, $\mathcal{G}_2^{(1)} = G_2$, $G_2$ and $G_2^{(1)}$ are strictly positive, and $G_1$ and $a$ are integrable functions.
\cite{fissler2016higher} also showed that, under some regularity conditions, there exist no strictly consistent loss functions outside the class of functions given above, which implies that \eqref{equ:joint_loss} is the most general class of objective functions that can be applied for the joint regression model.
Given data $(Y_i, \mathbf{X}_i)$, the corresponding joint estimators $\tilde{\boldsymbol{\theta}} = (\tilde{\boldsymbol{\theta}}^{q \prime}, \tilde{\boldsymbol{\theta}}^{e \prime})^\prime$ can be obtained by
\begin{equation}\label{equ:joint_estimation}
\tilde{\boldsymbol{\theta}} = \underset{\boldsymbol{\theta}}{\arg \min} \sum_{i=1}^{n}\rho_\tau(Y_i, \mathbf{X}_i, \boldsymbol{\theta}).
\end{equation}
\cite{dimitriadis2019joint} utilized the joint loss functions in a regression setup and proposed to estimate the quantile and ES parameters jointly by \eqref{equ:joint_estimation}.
However, As a one-step procedure, the estimation is computationally challenging especially when the dimension of parameters is large since the loss functions are neither differentiable nor convex. To reduce the computational effort, we propose a two-step estimation procedure by estimating the quantile parameters first using standard quantile regression.

\begin{remark}
     Following \cite{gneiting2011making} and \cite{fissler2016higher}, we introduce the concept of strictly consistent loss functions. A statistical functional, such as the mean or the $\tau$th quantile, is called elicitable if there is a loss function such that the functional is the unique minimizer of the expected loss. Such a loss function is said to be strictly consistent for the functional. The strictly consistency of $\rho_\tau(Y, \mathbf{X}, \boldsymbol{\theta})$ in \eqref{equ:joint_loss} implies that the parameter vector $\boldsymbol{\theta}_0 = (\boldsymbol{\theta}^{q \prime}_0, \boldsymbol{\theta}^{e \prime}_0)^\prime$ is the unique minimizer of $\rho_\tau(Y, \mathbf{X}, \boldsymbol{\theta})$, and thus $\rho_\tau(Y, \mathbf{X}, \boldsymbol{\theta})$ can be used as the objective function to estimate the regression coefficients.
\end{remark}

\subsection{Two-step estimation}\label{subsec:two_step_estimation_ES}

The first part of the joint loss functions \eqref{equ:joint_loss} corresponds to quantile and depends only on quantile. If the quantile parameters are known by ``oracle", we can plug the quantile parameters into the joint loss functions and use only the second part for the estimation of ES parameters, thus effectively reducing the computational complexity.
Let $\hat{\boldsymbol{\theta}}^q$ be a consistent estimator of $\boldsymbol{\theta}^q_0$, then the first part in \eqref{equ:joint_loss} is fixed given the quantile estimate. In addition, the function $a$ depends only on $Y$ and does not affect the estimation procedure. Therefore, in practice, we can consider the following much simpler plug-in objective functions to obtain the estimated ES regression parameters, 
\begin{align} \label{equ:sep_loss}
\rho_\tau(Y, \mathbf{X}, \hat{\boldsymbol{\theta}}^q, \boldsymbol{\theta}^e) &= G_2(\mathbf{X}^\prime \boldsymbol{\theta}^e) \cdot \left\{\mathbf{X}^\prime \boldsymbol{\theta}^e - \mathbf{X}^\prime \hat{\boldsymbol{\theta}}^q + \frac{(\mathbf{X}^\prime \hat{\boldsymbol{\theta}}^q - Y) I(Y \le \mathbf{X}^\prime \hat{\boldsymbol{\theta}}^q)}{\tau}\right\} - \mathcal{G}_2(\mathbf{X}^\prime \boldsymbol{\theta}^e),
\end{align}
and the associated ES coefficient estimator $\hat{\boldsymbol{\theta}}^e$ can be represented as
\begin{align}\label{equ:two-step_estimator}
\hat{\boldsymbol{\theta}}^e =  \underset{\boldsymbol{\theta}^e}{\arg \min} \sum_{i=1}^n \rho_\tau(Y_i, \mathbf{X}_i, \hat{\boldsymbol{\theta}}^q, \boldsymbol{\theta}^e).
\end{align}

In practice, we estimate $\hat{\boldsymbol{\theta}}^q$ with the \textit{quantreg} package in R. Numerical estimation methods for linear quantile regression has be well developed and the regression coefficients can be obtained efficiently based on linear programming; see \cite{koenkerquantile} for details on linear quantile regression specification. 
Compared with the joint estimator $\tilde{\boldsymbol{\theta}}^e$, the two-step estimator $\hat{\boldsymbol{\theta}}^e$ is computationally more efficient. Furthermore, under the following regularity assumptions, we can show that the two estimators are asymptotically equivalent.

\begin{enumerate}
	\item [A1] The matrix $E\big(\mathbf{X} \mathbf{X}^\prime\big)$ is positive definite.
	\item [A2] The data $(Y_i, \mathbf{X}_i)$ is an independent and identically distributed (i.i.d.) sample of size $n$. Furthermore, the conditional distribution of $Y$ given $\mathbf{X}$, $F_{Y|\mathbf{X}} (\cdot)$ has finite second moment, and is absolutely continuous with a continuous density $f_{Y|\mathbf{X}}$,
	which is strictly positive, continuous and bounded in a neighborhood of the $\tau$th conditional quantile of $Y$.
	\item [A3] The class of strictly consistent joint loss functions is given by \eqref{equ:joint_loss}, where $G_1$ is an increasing and twice continuously differentiable function, $\mathcal{G}_2$ is a three-times continuously differentiable function, $\mathcal{G}_2^{(1)} = G_2$, $G_2$ and $G_2^{(1)}$ are strictly positive, and $G_1$ and $a$ are integrable functions.
	\item [A4] $\hat{\boldsymbol{\theta}}^q$ is a $\sqrt{n}$-consistent estimator of $\boldsymbol{\theta}^q_0$. 
\end{enumerate}

\begin{theorem} \label{thm:sep_asy_normality}
	 Under Assumptions A1, A2, A3, A4 and the Moment Conditions ($\mathcal{M}$-1) in Section \ref{subsec:moment_cond}, we have
	\begin{equation}\label{equ:sep_asy_normality}
	\sqrt{n} \big(\hat{\boldsymbol{\theta}}^e - \boldsymbol{\theta}^e_0\big) \overset{d}{\to} N(\mathbf{0}, \Lambda^{-1} \Omega \Lambda^{-1}),
	\end{equation}
	where $\boldsymbol{\theta}_0 = (\boldsymbol{\theta}^{q \prime}_0, \boldsymbol{\theta}^{e \prime}_0)^\prime$ is the true parameter vector and 
	\begin{align}\label{equ:joint_asy_cov1}
	&\Lambda = E\big\{(\mathbf{X}\mathbf{X}^\prime) G_2^{(1)}(\mathbf{X}^\prime \boldsymbol{\theta}^e_0)\big\}, \\
	\label{equ:joint_asy_cov2}
	&\Omega = E\left[\big(\mathbf{X}\mathbf{X}^\prime\big) \big\{G_2^{(1)}(\mathbf{X}^\prime \boldsymbol{\theta}^e_0)\big\}^2 \times \big\{\frac{1}{\tau} \psi\big(u^q\big) + \frac{1-\tau}{\tau} \phi^2(\mathbf{X}, \boldsymbol{\theta}^q_0, \boldsymbol{\theta}^e_0)\big\}\right], \\
	\label{equ:joint_asy_cov3}
	&\psi\big(u^q\big) = \text{Var}\big(u^q | u^q \le 0, \mathbf{X}\big) = \text{Var}\big(Y - \mathbf{X}^\prime \boldsymbol{\theta}^q_0 | Y \le \mathbf{X}^\prime \boldsymbol{\theta}^q_0, \mathbf{X}\big), \\
	\label{equ:joint_asy_cov4}
	&\phi(\mathbf{X}, \boldsymbol{\theta}^q_0, \boldsymbol{\theta}^e_0) = \mathbf{X}^\prime \boldsymbol{\theta}^q_0 - \mathbf{X}^\prime \boldsymbol{\theta}^e_0.
	\end{align}
\end{theorem}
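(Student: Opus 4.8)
The plan is to treat $\hat{\boldsymbol{\theta}}^e$ as a $Z$-estimator of an estimating equation that is smooth in $\boldsymbol{\theta}^e$, establish its consistency, carry out a mean-value expansion, and then show by a stochastic-equicontinuity argument that the first-step error $\hat{\boldsymbol{\theta}}^q-\boldsymbol{\theta}^q_0$ is negligible to first order (Neyman orthogonality), so that $\hat{\boldsymbol{\theta}}^e$ behaves exactly as if $\boldsymbol{\theta}^q_0$ were known. First I would note that, under A3, the plug-in loss \eqref{equ:sep_loss} is $C^2$ in $\boldsymbol{\theta}^e$ and, since $\mathcal{G}_2^{(1)}=G_2$, the two $G_2$ terms cancel on differentiation, so $\hat{\boldsymbol{\theta}}^e$ solves $n^{-1}\sum_{i=1}^n\mathbf{g}(Y_i,\mathbf{X}_i,\hat{\boldsymbol{\theta}}^q,\hat{\boldsymbol{\theta}}^e)=\mathbf{0}$ with
\[
\mathbf{g}(Y,\mathbf{X},\boldsymbol{\theta}^q,\boldsymbol{\theta}^e)=\mathbf{X}\,G_2^{(1)}(\mathbf{X}'\boldsymbol{\theta}^e)\Big(\mathbf{X}'\boldsymbol{\theta}^e-\mathbf{X}'\boldsymbol{\theta}^q+\tfrac{1}{\tau}(\mathbf{X}'\boldsymbol{\theta}^q-Y)\,I(Y\le\mathbf{X}'\boldsymbol{\theta}^q)\Big).
\]
Conditioning on $\mathbf{X}$, $P(Y\le\mathbf{X}'\boldsymbol{\theta}^q_0\mid\mathbf{X})=\tau$ because $\mathbf{X}'\boldsymbol{\theta}^q_0$ is the $\tau$th conditional quantile, while the identification condition $ES_\tau(u^e\mid\mathbf{X})=0$ is equivalent to $E\{Y\,I(Y\le\mathbf{X}'\boldsymbol{\theta}^q_0)\mid\mathbf{X}\}=\tau\,\mathbf{X}'\boldsymbol{\theta}^e_0$; together these give $E\{\mathbf{g}(Y,\mathbf{X},\boldsymbol{\theta}^q_0,\boldsymbol{\theta}^e_0)\}=\mathbf{0}$.

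For consistency I would use that, by the strict consistency of the joint loss function \eqref{equ:joint_loss}, the population criterion $\boldsymbol{\theta}^e\mapsto E\{\rho_\tau(Y,\mathbf{X},\boldsymbol{\theta}^q_0,\boldsymbol{\theta}^e)\}$ is uniquely minimized at $\boldsymbol{\theta}^e_0$. A uniform law of large numbers holds over a compact neighborhood of $(\boldsymbol{\theta}^q_0,\boldsymbol{\theta}^e_0)$ because the relevant function class is of VC type (a single halfspace indicator times continuous functions) with an integrable envelope under the moment conditions ($\mathcal{M}$-1); combining this with $\hat{\boldsymbol{\theta}}^q\overset{p}{\to}\boldsymbol{\theta}^q_0$ (A4) and the standard argmin-consistency theorem gives $\hat{\boldsymbol{\theta}}^e\overset{p}{\to}\boldsymbol{\theta}^e_0$.

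Next, a mean-value expansion in the smooth argument $\boldsymbol{\theta}^e$ yields
\[
\mathbf{0}=\frac{1}{\sqrt{n}}\sum_i\mathbf{g}(Y_i,\mathbf{X}_i,\hat{\boldsymbol{\theta}}^q,\boldsymbol{\theta}^e_0)+\Big\{\frac{1}{n}\sum_i\partial_{\boldsymbol{\theta}^e}\mathbf{g}(Y_i,\mathbf{X}_i,\hat{\boldsymbol{\theta}}^q,\bar{\boldsymbol{\theta}}^e)\Big\}\sqrt{n}(\hat{\boldsymbol{\theta}}^e-\boldsymbol{\theta}^e_0),
\]
with $\bar{\boldsymbol{\theta}}^e$ between $\hat{\boldsymbol{\theta}}^e$ and $\boldsymbol{\theta}^e_0$. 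A uniform LLN plus continuity shows the bracketed Jacobian tends to $E\{\partial_{\boldsymbol{\theta}^e}\mathbf{g}(Y,\mathbf{X},\boldsymbol{\theta}^q_0,\boldsymbol{\theta}^e_0)\}=\Lambda$ (the $G_2^{(2)}$ term disappears because its scalar multiplier has conditional mean $0$), and $\Lambda$ is invertible by A1 and $G_2^{(1)}>0$. For the leading term I would write
\[
\frac{1}{\sqrt{n}}\sum_i\mathbf{g}(Y_i,\mathbf{X}_i,\hat{\boldsymbol{\theta}}^q,\boldsymbol{\theta}^e_0)=\frac{1}{\sqrt{n}}\sum_i\mathbf{g}(Y_i,\mathbf{X}_i,\boldsymbol{\theta}^q_0,\boldsymbol{\theta}^e_0)+\sqrt{n}\,\bar{\mathbf{g}}(\hat{\boldsymbol{\theta}}^q)+o_p(1),
\]
where $\bar{\mathbf{g}}(\boldsymbol{\theta}^q)=E\{\mathbf{g}(Y,\mathbf{X},\boldsymbol{\theta}^q,\boldsymbol{\theta}^e_0)\}$ and the $o_p(1)$ empirical-process remainder is controlled by stochastic equicontinuity of the Donsker class $\{\mathbf{g}(\cdot,\cdot,\boldsymbol{\theta}^q,\boldsymbol{\theta}^e_0):\|\boldsymbol{\theta}^q-\boldsymbol{\theta}^q_0\|\le\delta\}$ together with $\hat{\boldsymbol{\theta}}^q\overset{p}{\to}\boldsymbol{\theta}^q_0$. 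The crucial identity is that $\bar{\mathbf{g}}$ is differentiable at $\boldsymbol{\theta}^q_0$ with $\partial_{\boldsymbol{\theta}^q}\bar{\mathbf{g}}(\boldsymbol{\theta}^q_0)=E\big\{\mathbf{X}\mathbf{X}'\,G_2^{(1)}(\mathbf{X}'\boldsymbol{\theta}^e_0)\big(F_{Y|\mathbf{X}}(\mathbf{X}'\boldsymbol{\theta}^q_0)/\tau-1\big)\big\}=\mathbf{0}$ because $F_{Y|\mathbf{X}}(\mathbf{X}'\boldsymbol{\theta}^q_0)=\tau$, whence $\sqrt{n}\,\bar{\mathbf{g}}(\hat{\boldsymbol{\theta}}^q)=\partial_{\boldsymbol{\theta}^q}\bar{\mathbf{g}}(\boldsymbol{\theta}^q_0)\,\sqrt{n}(\hat{\boldsymbol{\theta}}^q-\boldsymbol{\theta}^q_0)+o_p(1)=o_p(1)$ by A4. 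This orthogonality step, together with justifying the equicontinuity of the estimating function that is only piecewise smooth in $\boldsymbol{\theta}^q$, is what I expect to be the main obstacle.

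Finally, combining the displays gives $\sqrt{n}(\hat{\boldsymbol{\theta}}^e-\boldsymbol{\theta}^e_0)=-\Lambda^{-1}n^{-1/2}\sum_i\mathbf{g}(Y_i,\mathbf{X}_i,\boldsymbol{\theta}^q_0,\boldsymbol{\theta}^e_0)+o_p(1)$, and since $E\{\mathbf{g}(Y,\mathbf{X},\boldsymbol{\theta}^q_0,\boldsymbol{\theta}^e_0)\}=\mathbf{0}$ with finite second moment under ($\mathcal{M}$-1), the central limit theorem gives the stated normal limit with sandwich covariance $\Lambda^{-1}\Omega\Lambda^{-1}$, $\Omega=E\{\mathbf{g}\,\mathbf{g}'\}$. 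To match $\Omega$ with \eqref{equ:joint_asy_cov2}, I would condition on $\mathbf{X}$ and set $u^q=Y-\mathbf{X}'\boldsymbol{\theta}^q_0$ and $\phi=\mathbf{X}'\boldsymbol{\theta}^q_0-\mathbf{X}'\boldsymbol{\theta}^e_0$: the scalar factor in $\mathbf{g}$ equals $-\phi-\tau^{-1}u^q\,I(u^q\le0)$, and using $E\{u^q I(u^q\le0)\mid\mathbf{X}\}=-\tau\phi$ and $E\{(u^q)^2 I(u^q\le0)\mid\mathbf{X}\}=\tau\{\psi(u^q)+\phi^2\}$, its conditional second moment reduces to $\tau^{-1}\psi(u^q)+\tau^{-1}(1-\tau)\phi^2$, which is exactly the bracketed quantity in \eqref{equ:joint_asy_cov2}.
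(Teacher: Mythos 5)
Your proposal is correct and follows essentially the same route as the paper's proof: a first-order expansion of the estimating equation in $\boldsymbol{\theta}^e$ with the Jacobian converging to $\Lambda$ (the paper's Lemma \ref{lemma:sep_asy_normality_lemma1}), a stochastic-equicontinuity argument to replace the plug-in score by its drift (Lemma \ref{lemma:sep_asy_normality_lemma2}), the Neyman-orthogonality computation $\partial_{\boldsymbol{\theta}^q}\bar{\mathbf{g}}(\boldsymbol{\theta}^q_0)=\mathbf{0}$ via $F_{Y|\mathbf{X}}(\mathbf{X}'\boldsymbol{\theta}^q_0)=\tau$ together with $\sqrt{n}\|\hat{\boldsymbol{\theta}}^q-\boldsymbol{\theta}^q_0\|^2=o_p(1)$ to kill that drift (Lemma \ref{lemma:sep_asy_normality_lemma3}), and a CLT for $n^{-1/2}\sum_i\omega_i(\boldsymbol{\theta}^q_0,\boldsymbol{\theta}^e_0)$. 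Your explicit verification of the consistency of $\hat{\boldsymbol{\theta}}^e$ and of the form of $\Omega$ (via $E\{u^qI(u^q\le0)\mid\mathbf{X}\}=-\tau\phi$ and $E\{(u^q)^2I(u^q\le0)\mid\mathbf{X}\}=\tau\{\psi+\phi^2\}$) goes slightly beyond what the paper writes out, and is correct.
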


Assumption A1 is required to exclude the multicollinearity of the stochastic explanatory variables. Assumption A2 includes standard assumptions in mean and quantile regression, and the finite conditional moment of $Y$ given $\mathbf{X}$ is assumed since ES is a truncated mean of quantiles. The conditions on functions $G_1$ and $\mathcal{G}_2$ are required for the strictly consistent joint loss functions.
Assumption A4 is made for convenience, and it can be relaxed to $\sqrt{n}\|\hat{\boldsymbol{\theta}}^q - \boldsymbol{\theta}^q_0\|^2 = o_p(1)$; see discussions in the proof of Lemma \ref{lemma:sep_asy_normality_lemma3} in Section \ref{subsec:proof_of_thm}.

\begin{remark}
   \cite{dimitriadis2019joint} established the asymptotic normality of the joint estimators $(\tilde{\boldsymbol{\theta}}^{q}, \tilde{\boldsymbol{\theta}}^{e})$, and show that the two estimators are asymptotically independent. Theorem \ref{thm:sep_asy_normality} implies that the proposed two-step estimator $\hat{\boldsymbol{\theta}}^e$ is asymptotically equivalent to the joint estimator $\tilde{\boldsymbol{\theta}}^{e}$, but the former is obtained with two steps and is numerically more efficient. In addition, the error involved in the quantile estimation $\hat{\boldsymbol{\theta}}^q$ in the first step does not affect the asymptotic distribution of $\hat{\boldsymbol{\theta}}^e$, which agrees with the results of \cite{dimitriadis2019joint}. This asymptotic independence result follows because 
\begin{align*}
    \frac{\partial^2 E\{\rho(Y, \mathbf{X}, \boldsymbol{\theta}^q, \boldsymbol{\theta}^e) | \mathbf{X}\}}{(\partial \boldsymbol{\theta}^q \partial \boldsymbol{\theta}^{e \prime})} \big|_{\boldsymbol{\theta}^q = \boldsymbol{\theta}^q_0} & = (\mathbf{X} \mathbf{X}^\prime) G_2^{(1)}(\mathbf{X}^\prime \boldsymbol{\theta}^e) \frac{F_{Y|\mathbf{X}}(\mathbf{X}^\prime \boldsymbol{\theta}^q) - \tau}{\tau} \big|_{\boldsymbol{\theta}^q = \boldsymbol{\theta}^q_0} = \boldsymbol{0}.
\end{align*}
That is, the partial derivative of the joint loss function \eqref{equ:joint_loss} evaluated at the true quantile coefficient is zero.
Therefore, when $\sqrt{n}\|\hat{\boldsymbol{\theta}}^q - \boldsymbol{\theta}^q_0\|^2 = o_p(1)$, $\hat{\boldsymbol{\theta}}^e$ is locally robust to the prior quantile estimation or its local misspecification; see \cite{chernozhukov2016locally} for an elaboration on local robustness. Furthermore, even though we assume both linear models for quantile and ES regression, the local robustness property enables us to consider more general models for the quantile estimation in the first step. For instance, the conditional quantile in the first step can be obtained by nonparametric regression, and this will not affect the asymptotic property of the two-step estimator $\hat{\boldsymbol{\theta}}^e$ as long as the ES regression model is correctly specified and the conditional quantile estimation is consistent with a certain rate.
\end{remark}

Based on the asymptotic normality of the two-step estimator $\hat{\boldsymbol{\theta}}^e$, 
a Wald-type test can be constructed for inference on $\boldsymbol{\theta}^e_0$ through direct estimation of the covariance matrix, which involves the conditional variance of the quantile residuals $\psi(u^q)$ given in \eqref{equ:joint_asy_cov3}.
However, accurate estimation of this nuisance quantity is challenging. First of all, for tail quantile levels, e.g., $\tau$ close to 0, corresponding to the left tail,
there exists very few (about $n \cdot \tau$) observations after conditional on $u^q \le 0$. Moreover, taking the dependence of the covariates $\mathbf{X}$ into consideration further complicates the estimation of the conditional variance, especially when the sample size is small.

\cite{dimitriadis2019joint} provided four different ways to directly estimate the asymptotic variance of the joint estimator $\tilde{\boldsymbol{\theta}}^{e}$. Since the joint estimator $\tilde{\boldsymbol{\theta}}^{e}$ and our proposed two-step estimator $\hat{\boldsymbol{\theta}}^{e}$ are asymptotically equivalent, we can adopt the same variance estimation procedures in \cite{dimitriadis2019joint}. We summarize different methods below.
The first three approaches involve the estimation of the nuisance quantity $\psi(u^q) = \text{Var}(u^q | u^q \le 0, \mathbf{X})$.
When errors are homogeneous such that the distribution of $u^q$ is independent of the regression covariates $\mathbf{X}$, we can estimate $\psi(u^q)$ simply by the sample variance of the negative residuals, that is,
\begin{align*}
    \hat{\psi}_I = \text{Var}(\hat{u}^q | \hat{u}^q \le 0),
\end{align*}
where $\hat{u}^q = Y - \mathbf{X}^\prime \hat{\boldsymbol{\theta}}^q$ is the estimated quantile residual, and we refer to $\hat{\psi}_I$ as the \textit{iid} estimator. The second estimator allows for a location-scale dependence structure of the quantile residuals on $\mathbf{X}$,
\begin{equation}\label{equ:ls_quant_res}
u^q = \mathbf{X}^\prime \boldsymbol{\alpha} + \big(\mathbf{X}^\prime \boldsymbol{\nu}\big) \cdot \epsilon,
\end{equation}
where $\boldsymbol{\alpha}$ and $\boldsymbol{\nu}$ are $p$-dimensional parameter vectors, and $\epsilon \sim F_\epsilon(0, 1)$ follows some distribution $F_\epsilon$ with zero-mean and unit variance. The conditional distribution of $u^q$ given $\mathbf{X}$ is $F_\epsilon\big(\mathbf{X}^\prime \boldsymbol{\alpha}, (\mathbf{X}^\prime \boldsymbol{\nu})^2 \big)$, 
and the truncated conditional density of $u^q$ given $u^q \le 0$ and $\mathbf{X}$ is 
\begin{align} \label{equ:h_fun}
    h(z | \mathbf{X}) = (\mathbf{X}^\prime \nu)^{-1} f_\epsilon \big(\frac{z - \mathbf{X}^\prime \alpha}{\mathbf{X}^\prime \nu}\big) \big/ F_\epsilon\big(-\frac{\mathbf{X}^\prime \alpha}{\mathbf{X}^\prime \nu} \big).
\end{align}
The conditional variance $\psi(u^q)$ can be estimated by quasi generalized pseudo maximum likelihood (\citeauthor{gourieroux1984pseudo}, \citeyear{gourieroux1984pseudo}) based on the scaling formula $$
\text{Var}\big(u^q | u^q \le 0, \mathbf{X}\big) = \int_{-\infty}^{0} z^2 h(z | \mathbf{X}) dz - \big(\int_{-\infty}^{0} z h(z | \mathbf{X}) dz\big)^2.
$$ 
In practice, we first estimate the conditional mean and variance of $u^q$ given $\mathbf{X}$ by MLE (maximum likelihood estimator) and then employ kernel density estimation to estimate the unknown distribution $F_\epsilon$ nonparametrically. Then truncated density $h(\cdot | \mathbf{X})$ is calculated by \eqref{equ:h_fun} accordingly. The resulting estimator of $\psi$ is referred to as the \textit{nid} estimator (denoted by $\hat{\psi}_N$). 

The third option discussed in \cite{dimitriadis2019joint} is by assuming that $\epsilon \sim N(0, 1)$ in \eqref{equ:ls_quant_res}. However, our empirical investigation suggests that this approach does not perform well in some situations. Therefore, throughout the numerical studies, we focus on the $\textit{iid}$ and $\textit{nid}$ approaches for estimating the conditional variance $\psi(u^q)$.

Another feasible alternative of covariance estimation is by adopting the \textit{bootstrap} method \citep{efron1992bootstrap}. We generate $B$ bootstrap samples by randomly selecting the $n$ pairs of $(Y_i, \mathbf{X}_i)$ with replacement. We can then obtain $B$ bootstrap ES coefficient estimators to each of the bootstrap samples by applying either the one-step or the proposed two-step estimation approach.
The bootstrap covariance is then approximated by the sample covariance of the $B$ bootstrap parameter estimates.

\begin{theorem} \label{thm:joint_var_consistency}
Let $\hat{\boldsymbol{\theta}}^e$ be the two-step estimator obtained by \eqref{equ:two-step_estimator} and denote $\hat{u}^q_i = Y_i - \mathbf{X}_i^\prime \hat{\boldsymbol{\theta}}^q$,
	\begin{align*}
	&\hat{\Lambda} = n^{-1} \sum_{i=1}^n (\mathbf{X}_i \mathbf{X}_i^\prime) \cdot G_2^{(1)} (\mathbf{X}_i^\prime \hat{\boldsymbol{\theta}}^e), \\
	&\hat{\Omega}(\hat{\psi}) = n^{-1} \sum_{i=1}^n \big(\mathbf{X}_i \mathbf{X}_i^\prime\big) \cdot \big\{G_2^{(1)} (\mathbf{X}_i^\prime \hat{\boldsymbol{\theta}}^e)\big\}^2 \cdot \left\{\frac{1}{\tau} \hat{\psi}(\hat{u}^q_i) + \frac{1-\tau}{\tau}\phi^2(\mathbf{X}_i, \hat{\theta}^q, \hat{\theta}^e)\right\}.
	\end{align*}
	In addition, denote $\hat{\Omega}(\hat{\psi}_I)$ and $\hat{\Omega}(\hat{\psi}_N)$ as the estimated covariance matrices with the conditional variance $\psi(u^q) = \text{Var}(u^q | u^q \le 0, \mathbf{X})$ estimated by $\hat{\psi}_I$ and $\hat{\psi}_N$, respectively. If the assumptions of Theorem  \ref{thm:sep_asy_normality} hold, then $\hat{\Lambda} - \Lambda = o_p(1)$ and $\hat{\Omega}(\hat{\psi}_N) - \Omega = o_p(1)$. Furthermore, if $u^q$ is independent with $\mathbf{X}$, then $\hat{\Omega}(\hat{\psi}_I) - \Omega = o_p(1)$.
\end{theorem}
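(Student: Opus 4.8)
The plan is to establish the three stated convergences in turn, using the weak law of large numbers for i.i.d.\ averages (Assumption A2) as the workhorse and first-order Taylor (mean-value) expansions to replace the estimated coefficients $\hat{\boldsymbol{\theta}}^q,\hat{\boldsymbol{\theta}}^e$ by their population values. The $\sqrt{n}$-consistency from Assumption A4 and from Theorem~\ref{thm:sep_asy_normality} bounds the size of the resulting remainders, and the Moment Conditions ($\mathcal{M}$-1) supply the dominating envelopes that send them to $o_p(1)$. Throughout, I will use that $\mathcal{G}_2$ is three-times continuously differentiable, so that $G_2^{(1)}$ is continuously differentiable with a locally bounded derivative $G_2^{(2)}$.

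For $\hat{\Lambda}-\Lambda=o_p(1)$, decompose $\hat{\Lambda}-\Lambda = \big\{\hat{\Lambda} - n^{-1}\sum_i(\mathbf{X}_i\mathbf{X}_i')G_2^{(1)}(\mathbf{X}_i'\boldsymbol{\theta}^e_0)\big\} + \big\{n^{-1}\sum_i(\mathbf{X}_i\mathbf{X}_i')G_2^{(1)}(\mathbf{X}_i'\boldsymbol{\theta}^e_0) - \Lambda\big\}$. The second bracket is $o_p(1)$ by the law of large numbers under integrability of $\|\mathbf{X}\|^2|G_2^{(1)}(\mathbf{X}'\boldsymbol{\theta}^e_0)|$ from ($\mathcal{M}$-1). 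For the first bracket, a mean-value expansion of $G_2^{(1)}$ bounds its norm by $n^{-1}\sum_i\|\mathbf{X}_i\|^3|G_2^{(2)}(\xi_i)|\cdot\|\hat{\boldsymbol{\theta}}^e-\boldsymbol{\theta}^e_0\|$ for intermediate points $\xi_i$; combined with $\|\hat{\boldsymbol{\theta}}^e-\boldsymbol{\theta}^e_0\|=o_p(1)$ and a moment bound on $\|\mathbf{X}\|^3$ times $\sup|G_2^{(2)}|$ over a shrinking neighborhood of $\mathbf{X}'\boldsymbol{\theta}^e_0$ (part of ($\mathcal{M}$-1)), this is $o_p(1)$. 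The same reasoning, with $G_2^{(1)}$ replaced by $\{G_2^{(1)}\}^2$ or by the linear map $\phi(\mathbf{X},\boldsymbol{\theta}^q,\boldsymbol{\theta}^e)=\mathbf{X}'\boldsymbol{\theta}^q-\mathbf{X}'\boldsymbol{\theta}^e$, shows that in $\hat{\Omega}(\hat{\psi})$ one may freely replace $G_2^{(1)}(\mathbf{X}_i'\hat{\boldsymbol{\theta}}^e)$ and $\phi^2(\mathbf{X}_i,\hat{\boldsymbol{\theta}}^q,\hat{\boldsymbol{\theta}}^e)$ by their values at the true parameters up to an $o_p(1)$ error.

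For $\hat{\Omega}(\hat{\psi}_N)-\Omega=o_p(1)$, introduce the infeasible average $\Omega^\ast = n^{-1}\sum_i(\mathbf{X}_i\mathbf{X}_i')\{G_2^{(1)}(\mathbf{X}_i'\boldsymbol{\theta}^e_0)\}^2\big\{\tfrac{1}{\tau}\psi(u^q_i)+\tfrac{1-\tau}{\tau}\phi^2(\mathbf{X}_i,\boldsymbol{\theta}^q_0,\boldsymbol{\theta}^e_0)\big\}$, where $\psi(u^q_i)=\text{Var}(u^q\mid u^q\le 0,\mathbf{X}=\mathbf{X}_i)$ is a deterministic function of $\mathbf{X}_i$; then $\Omega^\ast-\Omega=o_p(1)$ by the law of large numbers. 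By the reductions above it remains only to show that $n^{-1}\sum_i(\mathbf{X}_i\mathbf{X}_i')\{G_2^{(1)}(\mathbf{X}_i'\boldsymbol{\theta}^e_0)\}^2\tfrac{1}{\tau}\{\hat{\psi}_N(\hat{u}^q_i)-\psi(u^q_i)\}=o_p(1)$. Here I would work under the location–scale specification \eqref{equ:ls_quant_res}: the plug-in location–scale estimators $\hat{\boldsymbol{\alpha}},\hat{\boldsymbol{\nu}}$ are consistent by standard M-estimation arguments, the kernel estimator $\hat{f}_\epsilon$ is uniformly consistent on compacts under the bandwidth and smoothness conditions in ($\mathcal{M}$-1), and substituting these into the smooth formula \eqref{equ:h_fun} yields $\hat{h}(z\mid\mathbf{X})\to h(z\mid\mathbf{X})$ uniformly for $z$ ranging over compacts and $\mathbf{X}$ outside an event of vanishing probability; integrating against $z$ and $z^2$, with tail control supplied by the finite conditional second moment in A2, gives $\sup_i|\hat{\psi}_N(\hat{u}^q_i)-\psi(u^q_i)|=o_p(1)$, after which a uniform law of large numbers with envelope $\|\mathbf{X}\|^2\{G_2^{(1)}\}^2$ times a variance bound closes the argument.

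For $\hat{\Omega}(\hat{\psi}_I)-\Omega=o_p(1)$ under $u^q\perp\mathbf{X}$, the truncated conditional variance is the constant $\psi_0:=\text{Var}(u^q\mid u^q\le 0)$, so by the $\hat{\Lambda}$-type reductions it suffices to prove $\hat{\psi}_I\xrightarrow{p}\psi_0$. Writing $\hat{u}^q_i=u^q_i-\mathbf{X}_i'(\hat{\boldsymbol{\theta}}^q-\boldsymbol{\theta}^q_0)$ with $\hat{\boldsymbol{\theta}}^q-\boldsymbol{\theta}^q_0=o_p(1)$, I would show $n^{-1}\sum_i I(\hat{u}^q_i\le 0)\to P(u^q\le 0)=\tau$ (using the quantile identifiability condition $Q_\tau(u^q\mid\mathbf{X})=0$) and $n^{-1}\sum_i(\hat{u}^q_i)^k I(\hat{u}^q_i\le 0)\to E\{(u^q)^k I(u^q\le 0)\}$ for $k=1,2$; the only subtle point is that $n^{-1}\sum_i|I(\hat{u}^q_i\le 0)-I(u^q_i\le 0)|=o_p(1)$, which follows because $u^q$ has a bounded density near $0$ (Assumption A2) and $\hat{\boldsymbol{\theta}}^q\to\boldsymbol{\theta}^q_0$. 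The continuous mapping theorem then gives $\hat{\psi}_I\to E\{(u^q)^2\mid u^q\le 0\}-\big(E\{u^q\mid u^q\le 0\}\big)^2=\psi_0$. I expect the uniform consistency of the nonparametric truncated-variance estimator $\hat{\psi}_N$ — the kernel-estimation plus tail-control step for $\hat{h}$ in \eqref{equ:h_fun} — to be the main obstacle, since it is the only place genuine nonparametric machinery enters; the $\hat{\Lambda}$ and $\hat{\psi}_I$ pieces are routine combinations of the law of large numbers, Taylor expansion, and residual approximation.
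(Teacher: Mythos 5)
Your proposal is correct and follows essentially the same route as the paper: a first-order expansion of $G_2^{(1)}$ (and of $\phi^2$) about the true parameters, the law of large numbers for the leading terms, and consistency of $\hat{\boldsymbol{\theta}}^q,\hat{\boldsymbol{\theta}}^e$ together with the moment envelopes to absorb the remainders. The paper writes this argument out only for $\hat{\Lambda}$ and disposes of $\hat{\Omega}$ with ``similar arguments,'' so your explicit treatment of the consistency of $\hat{\psi}_I$ (via convergence of the truncated empirical moments and the bounded-density control of $n^{-1}\sum_i|I(\hat{u}^q_i\le 0)-I(u^q_i\le 0)|$) and especially of the kernel-based $\hat{\psi}_N$ --- which genuinely requires the location--scale specification \eqref{equ:ls_quant_res} and uniform consistency of the estimated truncated density --- supplies detail the paper omits, and you are right that this last step is the only non-routine part.
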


\subsection{Proposed score test}\label{subsec:proposed_score_test_ES}
Theorem \ref{thm:joint_var_consistency} shows that the covariance matrix of $\hat{\boldsymbol{\theta}}^e$ can be estimated consistently.
However, in the quantile regression literature, it has been shown that Wald-type test based on direct estimation of the asymptotic covariance matrix is often unstable for small sample sizes.
One reason is due to the sensitivity of Wald-type test to the smoothing parameter involved in estimating the unknown conditional density function. The score test has been shown to have more stable performance than Wald test for quantile regression in finite samples \citep{chen2005computational, kocherginsky2005practical}.
Due to the connection between quantile and ES, we propose an alternative score-type test for the inference on ES regression parameters.

We partition $\boldsymbol{\theta}^e$ into two parts $\boldsymbol{\theta}_1^e \in \mathbb{R}^{p_1}$ and $\boldsymbol{\theta}_2^e \in \mathbb{R}^{p_2}$ with $p_1 + p_2 = p$, and let $\mathbf{W}$ and $\mathbf{Z}$ be the design vectors corresponding to $\boldsymbol{\theta}_1^e$ and $\boldsymbol{\theta}_2^e$, respectively. Suppose we want to test the hypotheses $H_0: \boldsymbol{\theta}_2^e = \mathbf{0}$ against $H_a: \boldsymbol{\theta}_2^e \ne \mathbf{0}$ in the joint regression model
\begin{equation*}
Q_\tau(Y | \mathbf{X}) = \mathbf{X}^\prime \boldsymbol{\theta}^q \quad \text{and} \quad ES_\tau(Y | \mathbf{X}) = \mathbf{W}^\prime \boldsymbol{\theta}^e_1 + \mathbf{Z}^\prime \boldsymbol{\theta}^e_2.
\end{equation*}

Denote
\begin{align} \label{equ:ortho_score}
&\Pi_W = (\mathbf{W}_1, \dots, \mathbf{W}_n)^\prime, \quad \Pi_Z = (\mathbf{Z}_1, \dots, \mathbf{Z}_n)^\prime, \nonumber \\
& G = \text{diag}\{G_2^{(1)}(\mathbf{X}_1^\prime \boldsymbol{\theta}^e), \dots, G_2^{(1)}(\mathbf{X}_n^\prime \boldsymbol{\theta}^e)\},  \\
&P = \Pi_W(\Pi_W^\prime G \Pi_W)^{-1} \Pi_W^\prime G, \qquad
\Pi_Z^* = (I - P)\Pi_Z, \nonumber
\end{align} 
and let $\mathbf{Z}_i^{*\prime}$ be the rows of $\Pi_Z^*$ corresponding to the $i$th subject. We consider the orthogonal transformation on $\mathbf{Z}$ to adjust for the dependence of $\mathbf{Z}$ and $\mathbf{W}$. This transformation is needed to cancel out the first-order bias involved in $\hat{\boldsymbol{\theta}}^e_1$ to prove Lemma \ref{lemma:score_stat_null_lemma2} in Section \ref{subsec:proof_of_thm}. The weighted projection in the orthogonal transformation through $G$ is needed to
to account for the heteroscedasticity; see some related discussion under quantile regression in \cite{koenker1999goodness}.
Our proposed score test statistic is defined as
\begin{equation}\label{equ:score_stat}
T_n(\hat{\psi}) = S_n^\prime \big\{\hat{\Sigma}_n(\hat{\psi})\big\}^{-1} S_n,
\end{equation}
where
\begin{align*}
&S_n = n^{-1/2}\sum_{i=1}^n \hat{\mathbf{Z}}_i^* G_2^{(1)}(\mathbf{X}_i^\prime \hat{\boldsymbol{\theta}}^e) \big\{\mathbf{W}_i^\prime \hat{\boldsymbol{\theta}}_1^e - \mathbf{X}_i^\prime \hat{\boldsymbol{\theta}}^q - \tau^{-1} \hat{u}^q_i I(\hat{u}^q_i \le 0)\big\}, \quad \hat{u}^q_i = Y_i - \mathbf{X}_i^\prime \hat{\boldsymbol{\theta}}^q,  \\
&\hat{\Sigma}_n(\hat{\psi}) = n^{-1} \sum_{i=1}^n \big(\hat{\mathbf{Z}}_i^* \hat{\mathbf{Z}}_i^{*\prime}\big) \big\{G_2^{(1)}(\mathbf{X}_i^\prime \hat{\boldsymbol{\theta}}^e)\big\}^2 \cdot \left\{\frac{1}{\tau} \hat{\psi}(\hat{u}^q_i) + \frac{1-\tau}{\tau} \phi^2(\mathbf{X}_i, \hat{\boldsymbol{\theta}}^q, \hat{\boldsymbol{\theta}}^e)\right\}.
\end{align*}
Here $\hat{\boldsymbol{\theta}}_1^e$ is the two-step estimator of the ES regression parameter $\boldsymbol{\theta}^{e}_1$ under $H_0$, and $\hat{\boldsymbol{\theta}}^e$ is the two-step estimator of $\boldsymbol{\theta}^e = (\boldsymbol{\theta}^{e \prime}_1, \boldsymbol{\theta}^{e \prime}_2)^\prime$ under the unrestricted model. To obtain $\hat{\mathbf{Z}}_i^*$, the weight matrix $G$ given in \eqref{equ:ortho_score} can be estimated with $\hat{\boldsymbol{\theta}}^e$.

Before presenting the asymptotic distribution of $T_n$, we
define $$
	\Sigma = E\left[\big(\mathbf{Z}^* \mathbf{Z}^{*\prime}\big) \big\{G_2^{(1)}(\mathbf{X}^\prime \boldsymbol{\theta}_{0}^e)\big\}^2\left\{\frac{1}{\tau} \psi(u^q) + \frac{1-\tau}{\tau} \phi^2(\mathbf{X}, \boldsymbol{\theta}_0^q, \boldsymbol{\theta}_0^e)\right\}\right].
	$$
We also introduce an additional assumption A5.
\begin{itemize}
	\item [A5] The minimum eigenvalue of $\Sigma$ is bounded away from zero.
\end{itemize}

\begin{theorem}\label{thm:score_stat_null}
	Suppose that assumptions in Theorem \ref{thm:sep_asy_normality} and A5 hold, we have:
	\begin{itemize}
	    \item [(i)] under $H_0$, $T_n(\hat{\psi}_N) \overset{d}{\to} \chi^2_{p_2}$ as $n \to \infty$. Furthermore, if $u^q$ is independent with $\mathbf{X}$, $T_n(\hat{\psi}_I) \overset{d}{\to} \chi^2_{p_2}$ as $n \to \infty$;
	    \item [(ii)] under the local alternative hypothesis $H_n: \boldsymbol{\theta}_2^e = \boldsymbol{\theta}_{20}^e / \sqrt{n}$ with $\boldsymbol{\theta}_{20}^e$ be some non-zero parameter vector corresponding to $\mathbf{Z}$, $T_n(\hat{\psi}_N)$ asymptotically follows a non-central $\chi^2_{p_2}$ distribution with the noncentrality parameter $$
	    \zeta = E\big\{\mathbf{Z}^* G_2^{(1)}(\mathbf{X}^\prime \boldsymbol{\theta}_{0}^e) (\mathbf{Z}^\prime \boldsymbol{\theta}_{20}^e)\big\}^\prime \Sigma^{-1} E\big\{\mathbf{Z}^* G_2^{(1)}(\mathbf{X}^\prime \boldsymbol{\theta}_{0}^e) (\mathbf{Z}^\prime \boldsymbol{\theta}_{20}^e)\big\}.
	    $$
	    Furthermore, if $u^q$ is independent with $\mathbf{X}$, then $T_n(\hat{\psi}_I) \overset{d}{\to} \chi^2_{p_2}$ with the noncentrality parameter $\zeta$.
	\end{itemize}
\end{theorem}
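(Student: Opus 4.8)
The plan is to derive an asymptotically linear representation for the score vector $S_n$, identify its limiting Gaussian law under $H_0$ (and under $H_n$, with an explicit mean shift), establish consistency of $\hat{\Sigma}_n(\hat{\psi})$ for $\Sigma$, and then read off the (non-)central $\chi^2_{p_2}$ limit of $T_n=S_n^\prime\{\hat{\Sigma}_n(\hat{\psi})\}^{-1}S_n$ by Slutsky's theorem and the continuous mapping theorem; Assumption A5 guarantees $\Sigma$ is invertible so the limiting quadratic form is well defined. I would organize the argument in three reductions followed by a CLT.

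\emph{Reduction 1: replace estimated weights and transformed covariates by population versions.} Since $G_2^{(1)}$ is continuous and strictly positive (A3) and $\hat{\boldsymbol{\theta}}^e,\hat{\boldsymbol{\theta}}^e_1$ are $\sqrt{n}$-consistent (Theorem~\ref{thm:sep_asy_normality}), a uniform law of large numbers gives $n^{-1}\Pi_W^\prime G\Pi_W\to E\{\mathbf{W}\mathbf{W}^\prime G_2^{(1)}(\mathbf{X}^\prime\boldsymbol{\theta}_0^e)\}$ and $n^{-1}\Pi_W^\prime G\Pi_Z\to E\{\mathbf{W}\mathbf{Z}^\prime G_2^{(1)}(\mathbf{X}^\prime\boldsymbol{\theta}_0^e)\}$, so $\hat{\mathbf{Z}}_i^*$ may be replaced, up to $o_p(1)$ contributions to $S_n$ and $\hat{\Sigma}_n$, by the population weighted-projection residual $\mathbf{Z}_i^*=\mathbf{Z}_i-\{E(\mathbf{W}\mathbf{W}^\prime G_2^{(1)})\}^{-1}E(\mathbf{W}\mathbf{Z}^\prime G_2^{(1)})^\prime\mathbf{W}_i$ appearing in $\Sigma$. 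The key structural fact produced by this construction is the $G$-weighted orthogonality $E\{\mathbf{Z}^* \mathbf{W}^\prime G_2^{(1)}(\mathbf{X}^\prime\boldsymbol{\theta}_0^e)\}=\mathbf 0$.

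\emph{Reduction 2: linearize out $\hat{\boldsymbol{\theta}}^e_1$ and $\hat{\boldsymbol{\theta}}^q$.} Expanding $S_n$ in $\hat{\boldsymbol{\theta}}^e_1$ around $\boldsymbol{\theta}_{10}^e$, the first-order term has leading coefficient $n^{-1}\sum_i\mathbf{Z}_i^* G_2^{(1)}(\mathbf{X}_i^\prime\boldsymbol{\theta}_0^e)\mathbf{W}_i^\prime\to E\{\mathbf{Z}^* G_2^{(1)}\mathbf{W}^\prime\}=\mathbf 0$ by the orthogonality above, while $\sqrt{n}(\hat{\boldsymbol{\theta}}^e_1-\boldsymbol{\theta}_{10}^e)=O_p(1)$; hence this term is $o_p(1)$ (this is precisely why the weighted projection is built into the test). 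Expanding in $\hat{\boldsymbol{\theta}}^q$ around $\boldsymbol{\theta}_0^q$, the smooth part of the first-order term has coefficient converging to $E\{\mathbf{Z}^* G_2^{(1)}(\mathbf{X}^\prime\boldsymbol{\theta}_0^e)\mathbf{X}^\prime\}\{-1+\tau^{-1}F_{Y|\mathbf{X}}(\mathbf{X}^\prime\boldsymbol{\theta}_0^q)\}=\mathbf 0$ since $F_{Y|\mathbf{X}}(\mathbf{X}^\prime\boldsymbol{\theta}_0^q)=\tau$ — this is the local-robustness (Neyman-orthogonality) property already exploited for Theorem~\ref{thm:sep_asy_normality}. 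After these reductions, $S_n=n^{-1/2}\sum_i\mathbf{Z}_i^* G_2^{(1)}(\mathbf{X}_i^\prime\boldsymbol{\theta}_0^e)\{\mathbf{W}_i^\prime\boldsymbol{\theta}_{10}^e-\mathbf{X}_i^\prime\boldsymbol{\theta}_0^q-\tau^{-1}u_i^q I(u_i^q\le 0)\}+o_p(1)$, an i.i.d. sum. A conditional-moment computation using $E[(\mathbf{X}^\prime\boldsymbol{\theta}_0^q-Y)I(Y\le\mathbf{X}^\prime\boldsymbol{\theta}_0^q)\mid\mathbf{X}]=\tau\,\phi(\mathbf{X},\boldsymbol{\theta}_0^q,\boldsymbol{\theta}_0^e)$ and $E[(u^q)^2 I(u^q\le 0)\mid\mathbf{X}]=\tau\{\psi(u^q)+\phi^2(\mathbf{X},\boldsymbol{\theta}_0^q,\boldsymbol{\theta}_0^e)\}$ shows (under $H_0$, where $\mathbf{W}_i^\prime\boldsymbol{\theta}_{10}^e=ES_\tau(Y|\mathbf{X}_i)=\mathbf{X}_i^\prime\boldsymbol{\theta}_0^e$) that the summand has conditional mean $\mathbf 0$ and conditional variance $\mathbf{Z}^*\mathbf{Z}^{*\prime}\{G_2^{(1)}(\mathbf{X}^\prime\boldsymbol{\theta}_0^e)\}^2\{\tau^{-1}\psi(u^q)+\tau^{-1}(1-\tau)\phi^2\}$, whose expectation is $\Sigma$. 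The Lindeberg CLT then gives $S_n\overset{d}{\to}N(\mathbf 0,\Sigma)$. For the consistency of the denominator, the same uniform-LLN arguments together with Theorem~\ref{thm:joint_var_consistency} (which already shows $\hat{\psi}_N$, and $\hat{\psi}_I$ under $u^q\perp\mathbf{X}$, yield consistent plug-ins) give $\hat{\Sigma}_n(\hat{\psi}_N)\overset{p}{\to}\Sigma$, and $\hat{\Sigma}_n(\hat{\psi}_I)\overset{p}{\to}\Sigma$ when $u^q$ is independent of $\mathbf{X}$; combining with $S_n\overset{d}{\to}N(\mathbf 0,\Sigma)$ proves part (i). For part (ii), under $H_n$ the true conditional ES carries the extra drift $\mathbf{Z}^\prime\boldsymbol{\theta}_{20}^e/\sqrt{n}$, so the bracketed residual in $S_n$ acquires a conditional mean of order $n^{-1/2}$; repeating Reductions 1--2 (the $\hat{\boldsymbol{\theta}}_1^e,\hat{\boldsymbol{\theta}}^q$ substitution terms remain $o_p(1)$, and by contiguity the limiting variance is unchanged) yields $S_n\overset{d}{\to}N(\mu,\Sigma)$ with $\mu=E\{\mathbf{Z}^* G_2^{(1)}(\mathbf{X}^\prime\boldsymbol{\theta}_0^e)(\mathbf{Z}^\prime\boldsymbol{\theta}_{20}^e)\}$, hence $T_n$ converges to a non-central $\chi^2_{p_2}$ with noncentrality $\zeta=\mu^\prime\Sigma^{-1}\mu$.

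\emph{Main obstacle.} The delicate step is controlling the non-smooth indicator $I(\hat{u}_i^q\le 0)$ when $\hat{\boldsymbol{\theta}}^q$ replaces $\boldsymbol{\theta}_0^q$ in $S_n$: because $u\mapsto u\,I(u\le 0)$ is only Lipschitz, the ``linearization in $\boldsymbol{\theta}^q$'' is not a Taylor expansion but a stochastic-equicontinuity argument. One must show that the empirical process indexed by $\boldsymbol{\theta}^q$ near $\boldsymbol{\theta}_0^q$, formed from $\mathbf{Z}_i^* G_2^{(1)}(\cdot)$ times the centered bracketed residual, is stochastically equicontinuous over a shrinking neighborhood (via bracketing/Donsker bounds for this function class), so that only the smooth drift — differentiable thanks to the continuous conditional density in A2, and with vanishing derivative at $\boldsymbol{\theta}_0^q$ — survives. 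This is the familiar technical core of quantile-type asymptotics and is where most of the work resides; the remaining ingredients (uniform LLNs for the weight matrices, the conditional-moment identities, and assembling the quadratic form) are routine and are presumably isolated in the supporting lemmas referenced above.
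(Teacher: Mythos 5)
Your proposal is correct and follows essentially the same route as the paper: the paper's Lemmas \ref{lemma:score_stat_null_lemma1} and \ref{lemma:score_stat_null_lemma2} are precisely your stochastic-equicontinuity step for the non-smooth indicator and your drift computation, with the $\hat{\boldsymbol{\theta}}^e_1$ contribution killed by the $G$-weighted orthogonality of the projection and the $\hat{\boldsymbol{\theta}}^q$ contribution killed by the vanishing derivative of the conditional mean at $\boldsymbol{\theta}^q_0$ (via the $k_i,l_i$ expansion from Lemma \ref{lemma:sep_asy_normality_lemma3}), before applying the CLT and Slutsky. The only cosmetic difference is that the paper exploits the exact sample-level identity $\Pi_Z^{*\prime}G\Pi_W=\mathbf{0}$ rather than first passing to population projections.
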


\begin{remark}
	 The matrix $\hat{\Sigma}_n(\hat{\psi})$ in the score test statistic involves the estimation of the truncated conditional variance of $u^q$ given $u^q \le 0$ and $\mathbf{X}$. Similar to the Wald-type test, we consider both $\textit{iid}$ and $\textit{nid}$ estimators for the nuisance parameter $\psi$ to accommodate different scenarios.
\end{remark}

\begin{remark}
     The proposed estimation and inference methods can be directly applied for analyzing the upper tail CES at the probability level $\tau$, which is defined as
     \begin{equation} \label{equ:upper_CES}
         ES_\tau(Y | \mathbf{X}) = (1-\tau)^{-1}\int_\tau^1 F^{-1}_{Y|\mathbf{X}}(u)du = (1-\tau)^{-1}\int_\tau^1 Q_u(Y | \mathbf{X}) du.
     \end{equation}
     Assuming the upper tail ES regression model:
     $$
     ES_\tau(Y | \mathbf{X}) = \mathbf{X}^\prime \boldsymbol{\theta}^{e \star}.
     $$
     That is, $\boldsymbol{\theta}^{e \star}$ and $\boldsymbol{\theta}^{e}$ are the upper and lower tail ES regression parameters, respectively.
     Note that $F^{-1}_{Y|\mathbf{X}}(u) = -F^{-1}_{-Y|\mathbf{X}}(1-u)$, which implies the $u$th conditional quantile of $Y$ given $\mathbf{X}$ is the negative of the $(1-u)$th conditional quantile of $-Y$ given $\mathbf{X}$.
     In practice, if we are interested in the inference on $\boldsymbol{\theta}^{e \star}$, we can (1) change $Y$ to $Y^\star = -Y$ and let $\tau^\star = 1 - \tau$; (2) apply the proposed two-step estimation and inference methods on the lower tail ES of $Y^\star$ conditional on $\mathbf{X}$ with probability level $\tau^\star$. Then we have $\hat{\boldsymbol{\theta}}^{e \star} = -\hat{\boldsymbol{\theta}}^{e}$ and $\text{Var}(\hat{\boldsymbol{\theta}}^{e \star}) = \text{Var}(\hat{\boldsymbol{\theta}}^{e})$.
\end{remark}

\section{Simulation study}\label{sec:simulation_ES}
In this section, we investigate the finite sample performance of the proposed inference method through Monte Carlo simulation studies. For comparison purpose, we include the results given by the Wald-type and \textit{bootstrap} methods, which are based on the joint estimation and are implemented in the R package \textit{esreg} \citep{dimitriadis2019joint}. 
For both Wald and score methods, we consider \textit{W-IID} and \textit{S-IID} approaches, where the asymptotic variance is estimated under the homogeneous error assumption; and \textit{W-NID} and \textit{S-NID} method, where the conditional variance $\psi(u^q)$ is estimated by $\hat{\psi}_N$.
In addition, we also report the results of the \textit{COVES} test introduced by \cite{he2010detection}. Since the \textit{COVES} method focuses on the treatment difference at the right tail of the response distribution, for the simulation study in Section \ref{sec:simulation_ES} and the real data analysis in Section \ref{sec:real_data_ES}, we will focus on the inference for the upper tail CES at the probability level $\tau$, as defined in \eqref{equ:upper_CES}.

\subsection{Simulation Design}\label{subsec:simulation_design_ES}
The first two models we consider have simple setups:
\begin{itemize}
	\item \textit{Scenario 1}: $Y = 5 + \eta D + x_1 + \epsilon$; 
	\item \textit{Scenario 2}: $Y = 5 - \eta D + C + (1 + 0.25D + 2C)\epsilon$;
\end{itemize}
where $D$ is the binary treatment indicator, $x_1 \sim N(2.5, 0.5^2)$ and $\epsilon \sim N(0, 1)$, $C$ is an unbalanced covariate with $C \sim TN(\min = -0.5, \mu = 0.5, \sigma^2 = 0.5^2)$ (truncated normal distribution) in the treatment group and $C \sim TN(\min = -0.5, \mu = 0, \sigma^2 = 0.5^2)$ in control group. 

The first scenario is a homogeneous model where the regression error does not interact with any covariates. For the second model, the error depends on both the treatment variable and the unbalanced covariate $C$. At the probability level $\tau$, the marginal impact on the CES due to treatment effect is $$
\eta(\tau) = -\eta + 0.25 \cdot ES_\tau(\epsilon),$$
which is the ES coefficient associated with the treatment variable $D$.
In contrast, the expectation of the \textit{COVES} test statistic is given by \begin{equation}\label{equ:coves_stat}
    \mathcal{T}_\tau = \eta(\tau) + 2\big\{E(C|D=1) - E[(C|D=0)\big\} \cdot \big\{ES_\tau(\epsilon) - Q_\tau(\epsilon)\big\},
\end{equation}
where $Q_\tau(\epsilon)$ and $ES_\tau(\epsilon)$ are the marginal $\tau$th quantile and upper tail ES of $\epsilon$.
Due to the imbalance of covariate $C$, \textit{COVES} may inflate or deflate the treatment difference $\eta(\tau)$, which consequently makes the test either too liberal or too conservative.

\begin{remark}
     The difference between $\mathcal{T_\tau}$ and the treatment difference $\eta(\tau)$ is $$
     \mathcal{T_\tau} - \eta(\tau) = 2 \big\{E(C|D=1) - E[(C|D=0)\big\} \cdot \big\{ES_\tau(\epsilon) - Q_\tau(\epsilon)\big\}.
     $$
     Suppose that $C$ is an unbalanced covariate such that the mean of $C$ differs for the two treatment groups and the error $\epsilon$ depends on $C$. Then it is possible to have $\mathcal{T_\tau} - \eta(\tau) \ne 0$, and this may affect the power of the \textit{COVES} test.
     Specifically, if the second term on the right-hand-side (RHS) of \eqref{equ:coves_stat} cancels out with $\eta(\tau)$, \textit{COVES} may fail to detect the treatment difference.
     On the other hand, if the treatment has no impact on CES such that $\eta(\tau) = 0$, but the second term on the RHS of \eqref{equ:coves_stat} is non-zero, then \textit{COVES} may over-reject the null hypothesis and thus lead to higher false positive rate.
\end{remark}

There's only one confounding variable in the first two models, and both error terms follow normal distributions. To examine the robustness of the proposed method, we further consider another two scenarios where more regression covariates are included (Scenario 3) and the error has a heavy-tailed distribution (Scenario 4). The data are generated from
\begin{equation*}\label{equ:simeq3}
Y = 5 + \eta D + \sum_{i=2}^n x_i + (1 + \gamma D)\epsilon,
\end{equation*}
where $x_2$ is $Ber(0.4)$, $x_3$ and $x_4$ have standard log-normal distribution, $(x_5, x_6)$ is bivariate normal with mean (2,2), variance (1,1), and correlation 0.8, $x_7$ is chi-square distributed with one degree of freedom. Except for the correlation between $x_5$ and $x_6$, all other variables are independently generated.

\begin{itemize}
	\item \textit{Scenario 3}: we take $\gamma = 0$, and the error term $\epsilon \sim N(0, 1)$.
	\item \textit{Scenario 4}: we take $\gamma = 0.2$, and the error term $\epsilon \sim t_3/2$.
\end{itemize}

We consider two sample sizes $n = 50$ and $n = 100$ for each treatment group, and we focus on $\tau = 0.8$ and $\tau = 0.9$ in this study. The simulation is repeated 600 times for each scenario with a given value of $\eta$.

\subsection{Statistical Power for Testing the Treatment Effect}\label{subsec:power_ES}
For both Wald-type and score-type approaches, the estimation efficiency depends on the form of specification functions $G_1$ and $\mathcal{G}_2$.
\cite{dimitriadis2019joint} discusses several feasible choices, and their simulation analysis suggested that $G_1(z) = z$ and $\mathcal{G}_2(z) = -\log(-z)$ provide the most consistent estimation results under all scenarios considered. Throughout, we'll employ these two functions for the regression procedure.

Table \ref{tab:T1E} shows that Wald-type and score-type approaches both maintain the significance level reasonably well and the corresponding type I errors stay close to the nominal level of 0.05. However, the bootstrap method and the \textit{COVES} test yield inflated false positive rates for most cases, especially when sample size $n = 50$ and $\tau = 0.9$.
Under scenario 1, with only one covariate besides the treatment indicator, all methods perform quite similarly. However, as we add more confounding factors, the score-type testing methods show higher statistical power than the Wald-type approaches; see Figure \ref{fig:PA_plots} for some typical examples in Scenarios 3 and 4 at $\tau = 0.8$ and $n = 100$. And the power curves in Scenario 2 confirm that \textit{COVES} test gives biased estimation of the treatment difference due to the unbalanced covariate.

\begin{table}[H]
	\centering
	\caption{Type I error (percentage) for testing $H_0: \eta(\tau) = 0$ with nominal level of 5\%.}
	\label{tab:T1E}
	\begin{tabular}{ccccccccc}
		\hline
		Scenario & $n$ & $\tau$ & \textit{W-IID} &  \textit{W-NID} & \textit{S-IID} & \textit{S-NID} &  \textit{BOOT} & \textit{COVES} \\
		\hline
		\multirow{4}{*}{1}& \multirow{2}{*}{50} & 0.8 & 7.3 & 6.7 & 6.2 & 6.3 & 6.7 & 8.5 \\  
		&& 0.9 & 9.3 & 8.5 & 9.7 & 8.0 & 8.0 & \textbf{13.2} \\ 
		&\multirow{2}{*}{100}& 0.8 & 6.5 & 6.0 & 6.2 & 6.2 & 6.3 & 6.7 \\ 
		&& 0.9 & 6.7 & 5.5 & 6.2 & 5.5 & 6.7 & 7.7 \\ 
		\hline
		\multirow{4}{*}{2}& \multirow{2}{*}{50} & 0.8 & 2.3 & 3.8 & 2.7 & 4.2 & 6.8 & \textbf{13.0} \\ 
		&& 0.9 & 4.7 & 6.2 & 5.8 & 7.0 & 7.0 & \textbf{12.5} \\ 
		& \multirow{2}{*}{100} & 0.8 & 1.2 & 2.5 & 1.7 & 3.0 & 5.5 & \textbf{18.3} \\ 
		&& 0.9 & 2.0 & 3.0 & 2.3 & 3.2 & 5.3 & \textbf{11.8} \\ 
		\hline
		\multirow{4}{*}{3}& \multirow{2}{*}{50} & 0.8 & 5.7 & 6.0 & 4.5 & 4.8 & 8.2 & \textbf{12.5} \\ 
		&& 0.9 & 8.2 & 8.2 & 8.5 & 8.2 & \textbf{10.0} & \textbf{21.8} \\  
		& \multirow{2}{*}{100} & 0.8 & 3.5 & 4.3 & 4.0 & 4.3 & 5.2 & 6.7 \\ 
		&& 0.9 & 4.0 & 4.0 & 5.2 & 4.8 & 6.2 & 9.8 \\ 
		\hline
		\multirow{4}{*}{4}& \multirow{2}{*}{50} & 0.8 & 2.5 & 3.0 & 2.0 & 3.0 & 6.0 & 9.5 \\ 
		&& 0.9 & 3.0 & 3.5 & 4.7 & 5.0 & \textbf{11.0} & \textbf{17.8} \\ 
		& \multirow{2}{*}{100} & 0.8 & 3.2 & 3.7 & 1.8 & 2.7 & 7.7 & 5.5 \\ 
		&& 0.9 & 2.7 & 3.3 & 3.2 & 4.0 & 9.5 & \textbf{10.7} \\ 
		\hline
	\end{tabular}
	\begin{itemize}
	    \item [] \footnotesize \textit{W-IID} (\textit{W-NID}): Wald-type methods with $\psi(u^q)$ estimated by $\hat{\psi}_I$ ($\hat{\psi}_N$); \textit{S-IID} (\textit{S-NID}): score methods with $\psi(u^q)$ estimated by $\hat{\psi}_I$ ($\hat{\psi}_N$); \textit{BOOT}: bootstrap method based on the joint estimation; \textit{COVES}: method in \cite{he2010detection}.
	\end{itemize}
\end{table}

\subsection{Comparing the Confidence Interval of Treatment Coefficient}\label{subsec:CI_ES}
To access the performance of different methods for confidence interval construction, we fix $\eta = 1.35, 2, 2.5$ and 3.5 in Scenarios 1-4 respectively.
Tables \ref{tab:CI_re_50} and \ref{tab:CI_re_100} summarize the coverage percentage and average length of 95\% confidence intervals for the treatment difference $\eta(\tau)$. Under Scenario 1, the Wald-type and score methods show similar accuracy. For Scenarios 3 and 4 with more confounding factors, the score-type methods provide shorter confidence intervals with relatively higher coverage, which agrees with the results of the power analysis in Section \ref{subsec:power_ES}. Furthermore, when errors are i.i.d,
\textit{W-IID} and \textit{W-NID} approaches perform similarly. For Scenarios 2 and 4 when the errors are heterogeneous, \textit{W-NID} method shows better performance in the sense that it provides confidence intervals with coverage closer to the nominal level and shorter length than the \textit{W-IID} method.
On the other hand, the score methods are less sensitive to the violation of homogeneity assumption, as \textit{S-IID} and \textit{S-NID} give similar results across all scenarios considered. Overall, the \textit{COVES} method gives lowest coverage percentage, much below the nominal level 95\%, especially under Scenario 2 where the error term depends on an unbalanced covariate.

\begin{table}[H]
	\centering
	\caption{Coverage probabilities and average lengths (inside the parentheses) of $95\%$ confidence intervals from different methods when sample size $n = 50$. All values are in percentages.}
	\label{tab:CI_re_50}
	\begin{tabular}{cccccccc}
		\hline
		Scenario & $\tau$ & \textit{W-IID} &  \textit{W-NID} & \textit{S-IID} & \textit{S-NID} &  \textit{BOOT} & \textit{COVES} \\
		\hline
		\multirow{4}{*}{1}& \multirow{2}{*}{0.8} & 92.2 & 92.8 & 94.0 & 94.0 & 93.0 & 91.8 \\ 
		&& (120) & (122) & (120) & (120) & (124) & (113) \\ 
		& \multirow{2}{*}{0.9} &90.5 & 91.5 & 90.7 & 90.7 & 91.3 & \textcolor{red}{88.3} \\ 
		&& (146) & (149) & (146) & (146) & (144) & (133) \\ 
		\hline
		\multirow{4}{*}{2}& \multirow{2}{*}{0.8}&98.0 & 96.0 & 97.3 & 97.3& 93.5 & \textcolor{red}{85.8} \\ 
		&& (268) & (240) & (265) & (265) & (203) & (232) \\ 
		& \multirow{2}{*}{0.9}  &95.5 & 94.5 & 94.2 & 94.2& 93.3 & \textcolor{red}{84.7} \\ 
		&& (327) & (297) & (320) & (320) & (253) & (268) \\ 
		\hline
		\multirow{4}{*}{3}& \multirow{2}{*}{0.8} &92.7 & 92.3 & 95.5 & 95.5 & 94.7 & \textcolor{red}{87.7} \\ 
		&& (184) & (182) & (\textcolor{blue}{157}) & (\textcolor{blue}{157}) & (187) & (110) \\ 
		&\multirow{2}{*}{0.9} & 90.8 & 91.5 & 91.5 & 91.5 & 92.5 & \textcolor{red}{77.3} \\ 
		&& (210) & (210) & (\textcolor{blue}{175}) & (\textcolor{blue}{175}) & (192) & (119) \\ 
		\hline
		\multirow{4}{*}{4}& \multirow{2}{*}{0.8}&97.0 & 96.3 & 98.0 & 98.0 & 96.5 & 90.7 \\ 
		&& (234) & (227) & (\textcolor{blue}{200}) & (\textcolor{blue}{200}) & (197) & (143) \\ 
		& \multirow{2}{*}{0.9}  &96.3 & 96.2 & 95.5 & 95.5 & 94.3 & \textcolor{red}{83.3} \\ 
		&& (343) & (333) & (\textcolor{blue}{284}) & (\textcolor{blue}{284}) & (238) & (201) \\ 
		\hline
	\end{tabular}
	\begin{itemize}
	    \item [] \footnotesize \textit{W-IID} (\textit{W-NID}): Wald-type methods with $\psi(u^q)$ estimated by $\hat{\psi}_I$ ($\hat{\psi}_N$); \textit{S-IID} (\textit{S-NID}): score methods with $\psi(u^q)$ estimated by $\hat{\psi}_I$ ($\hat{\psi}_N$); \textit{BOOT}: bootstrap method based on the joint estimation; \textit{COVES}: method in \cite{he2010detection}.
	\end{itemize}
\end{table}

\begin{table}[H]
	\centering
	\caption{Coverage probabilities and average lengths (inside the parentheses) of $95\%$ confidence intervals from different methods when sample size $n = 100$. All values are in percentages.}
	\label{tab:CI_re_100}
	\begin{tabular}{cccccccc}
		\hline
		Scenario & $\tau$ & \textit{W-IID} &  \textit{W-NID} & \textit{S-IID} & \textit{S-NID} &  \textit{BOOT} & \textit{COVES} \\
		\hline
		\multirow{4}{*}{1}& \multirow{2}{*}{0.8}&93.3 & 93.8 & 93.8 & 93.8 & 93.7 & 92.3 \\ 
		&& (87.1) & (88.4) & (87.2) & (87.2) & (87.8) & (83.8) \\ 
		& \multirow{2}{*}{0.9}  &93 & 93.2 & 93.5 & 93.5 & 92.5 & 91.5 \\ 
		&& (109) & (110) & (108) & (108) & (107) & (103) \\ 
		\hline
		\multirow{4}{*}{2}& \multirow{2}{*}{0.8}&98.7 & 97.7 & 98.3 & 98.3 & 94.7 & \textcolor{red}{82.7} \\ 
		&& (194) & (173) & (192) & (192) & (143) & (174) \\ 
		& \multirow{2}{*}{0.9}  &98.2 & 97.5 & 97.7 & 97.7 & 94.7 & \textcolor{red}{86.8} \\ 
		&& (242) & (215) & (238) & (238) & (178) & (213) \\ 
		\hline
		\multirow{4}{*}{3}& \multirow{2}{*}{0.8} &94.7 & 94.5 & 96.2 & 96.2 & 95.7 & 93.7 \\ 
		&& (129) & (129) & (\textcolor{blue}{108}) & (\textcolor{blue}{108}) & (128) & (81.4) \\ 
		& \multirow{2}{*}{0.9}  &94.8 & 94.0 & 95.2 & 95.2 & 95.0 & \textcolor{red}{89.5} \\ 
		&& (150) & (149) & (\textcolor{blue}{130}) & (\textcolor{blue}{130}) & (137) & (96.3) \\ 
		\hline
		\multirow{4}{*}{4}& \multirow{2}{*}{0.8}&96.2 & 95.7 & 98.2 & 98.2 & 96.8 & 94.3 \\ 
		&& (170) & (161) & (\textcolor{blue}{152}) & (\textcolor{blue}{152})& (142) & (113) \\ 
		& \multirow{2}{*}{0.9}  &96.2 & 95.3 & 96.8 & 96.8 & 93.7 & \textcolor{red}{89.3} \\ 
		&& (260) & (247) & (\textcolor{blue}{233}) & (\textcolor{blue}{233}) & (188) & (176) \\ 
		\hline
	\end{tabular}
	\begin{itemize}
	    \item [] \footnotesize \textit{W-IID} (\textit{W-NID}): Wald-type methods with $\psi(u^q)$ estimated by $\hat{\psi}_I$ ($\hat{\psi}_N$); \textit{S-IID} (\textit{S-NID}): score methods with $\psi(u^q)$ estimated by $\hat{\psi}_I$ ($\hat{\psi}_N$); \textit{BOOT}: bootstrap method based on the joint estimation; \textit{COVES}: method in \cite{he2010detection}.
	\end{itemize}
\end{table}

\begin{figure}[H]
	\centering
	\includegraphics[width=15cm]{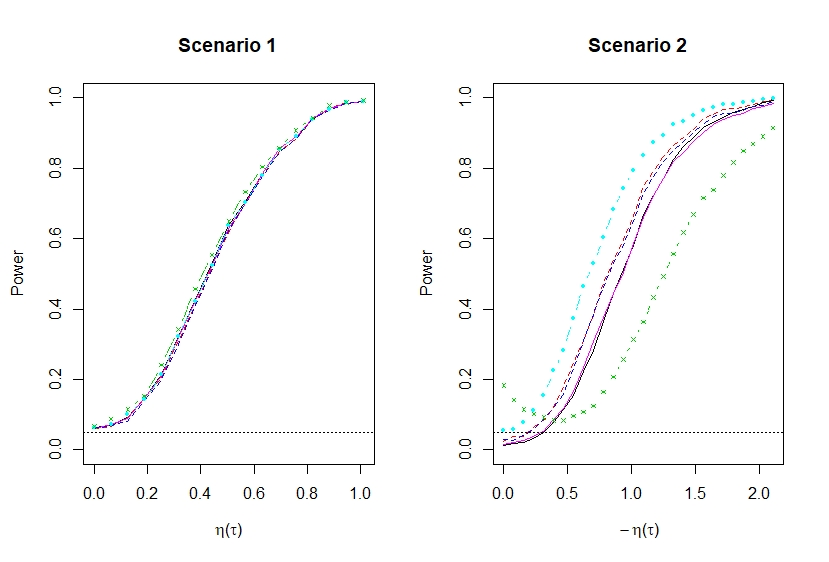}
	\includegraphics[width=15cm]{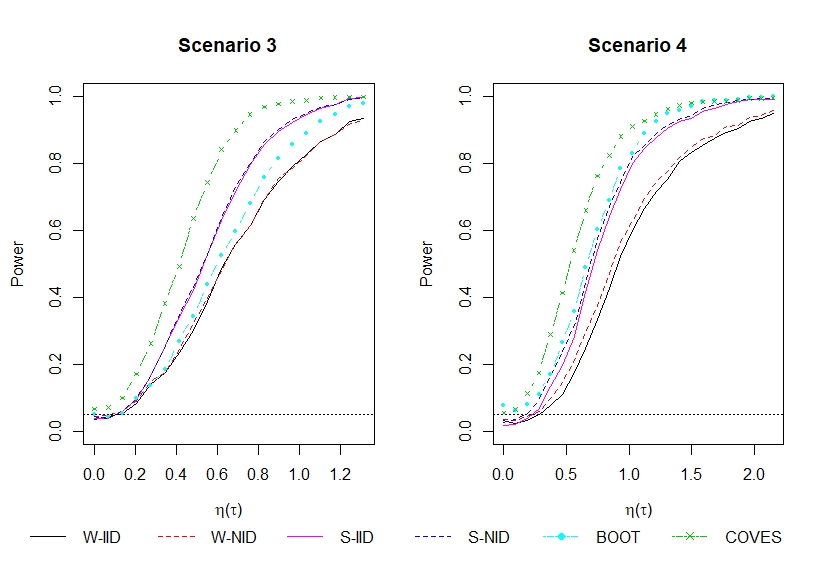}
	\caption{Power analysis plots for testing $H_0: \eta(\tau) = 0$ at $\tau = 0.8$ and $n = 100$.}
	\label{fig:PA_plots}
\end{figure}

\section{Real Data Analysis}\label{sec:real_data_ES}

\subsection{Application to 2018 CPS Income Data}\label{subsec:real_data_CPS_ES}
We illustrate the merit of the proposed score-type inference approach by analyzing a data set from the Current Population Survey (CPS) database, which can be assessed at \url{https://www.census.gov/programs-surveys/cps.html}. The CPS is a monthly survey of about 60,000 U.S. households conducted by the United States Census Bureau for the Bureau of Labor Statistics. Information collected in the survey includes employment status, income from work and a number of demographic characteristics. From the 2018 CPS Annual Social and Economic Supplement (ASEC) Bridge Files, we compile a data set, which contains 870 individuals (401 male and 469 female). To find out if there exists a pay gap between women and men, we use hourly wage in U.S. dollars as the response and consider \textit{Gender}, \textit{Age}, $\textit{Age}^2$, \textit{Education level (ordinal variable with three levels)} and \textit{Work status (full-times v.s. part-time)} as explanatory variables in the joint regression model.

To check whether the quantile error depends on the predictors, we conduct a heterogeneity analysis by analyzing the residual patterns. At a given quantile level $\tau$, we fit a linear quantile regression model and compare the variance of the quantile residuals for different covariate groups.
Table \ref{tab:res_var_edu} summarizes quantile residual variances associated with different education levels at $\tau = 0.7, 0.75$ and $0.8$. The result shows that the quantile residuals appear to depend on the \textit{Education level}. Therefore, for Wald and score inference approaches, we focus on \textit{W-NID} and \textit{S-NID} methods.
Moreover, the \textit{Education level} is an unbalanced covariate since the female group has a higher proportion of subjects with education level 3. 

\begin{table}[H]
\centering
\caption{The sample variance of quantile residuals for different education levels. Values inside the parentheses are standard errors obtained with jackknife.}
\label{tab:res_var_edu}
\begin{tabular}{cccc}
  \hline
 \multirow{2}{*}{Education level} & \multicolumn{3}{c}{$\tau$} \\ \cline{2-4}
  & 0.7 & 0.75 & 0.8 \\ 
  \hline
1 & 13.18 (2.17) & 14.03 (2.17) & 14.59 (2.09) \\ 
2 & 29.49 (3.03) & 29.49 (2.96) & 29.64 (2.96) \\ 
3 & 170.55 (39.25) & 170.49 (39.71) & 169.30 (39.52) \\ 
   \hline
\end{tabular}
\end{table}

Let $\theta_g^e$ denote the upper tail CES difference of the hourly wage between female and male groups. At the probability levels $0.7, 0.75$ and $0.8$, we apply the proposed inference method to test $H_0: \theta_g^e = 0$ against $H_a: \theta_g^e \ne 0$. Except for the \textit{COVES} method, all the other approaches suggest a significant pay gap between the two gender groups. We further calculate 95\% confidence intervals of $\theta_g^e$ using different methods and the results are summarized in Table \ref{tab:CI_re_CPS}.
Under the significance level of 5\%, results of \textit{W-NID}, \textit{S-NID} and the bootstrap approaches suggest that the female employee is substantially under-paid compared to the male employee, when other characteristics are kept the same. In contrast, the \textit{COVES} test may be negatively affected by the unbalanced covariate \textit{Education level} and thus fails to capture the tail difference.


\begin{table}[H]
	\centering
	\caption{The 95\% confidence intervals of $\theta_g^e$ given by different approaches, where $\theta_g^e$ is the CES difference of the hourly wage between female and male groups for the 2018 CPS income data.}
	\label{tab:CI_re_CPS}
	\begin{tabular}{cccccc}
		\hline
		$\tau$& \textit{W-NID} & \textit{S-NID} & \textit{BOOT} & \textit{COVES} \\ 
		\hline
		0.7 & (-3.09, -0.47) & (-3.27, -0.44) & (-3.24, -0.32) & \textbf{(-4.34, 0.12)} \\ 
		0.75 & (-3.31, -0.45) & (-3.41, -0.37) & (-3.39, -0.37) & \textbf{(-4.26, 0.6)} \\ 
		0.8 & (-3.35, -0.15) & (-3.63, -0.14) & (-3.45, -0.05) & \textbf{(-4.49, 1.14)} \\ 
		\hline
	\end{tabular}
\end{table}

\subsection{Power Analysis for the Opportunity Knocks Data}\label{subsec:real_data_OK_ES}
To further assess the finite sample performance of the proposed score test, we conduct a power analysis based on the ``Opportunity Knocks" (OK) experiment \citep{angrist2014opportunity}, which was designed to explore the effects of academic achievement awards for first-year and second-year college students. 
For our analysis, we consider a subset with all second-year students, which consists of 183 treated subjects and 337 untreated subjects. Treated students can receive bonus awards and have the opportunity to interact with randomly assigned peer advisors who can provide advice about study strategies, time management, and university bureaucracy.

The award scheme offered cash incentives to students with course grades above 70. Therefore, the academic performance of students can be measured by the amount they earned in the OK experiment.
To determine how the academic performance of students are motivated by the merit award, we define the response variable $Y$ as the earning of students (in 1000 U.S. dollars) from the OK program. Besides the treatment indicator $D$, we consider six additional covariates, including gender ($X_1$), high school grade ($X_2$), an indicator for English mother tongue ($X_3$), whether the student answers the scholarship formula question correctly ($X_4$, yes v.s. no), and mother's and father's education levels ($X_5$ and $X_6$, defined as above college degree or not). 
We apply the proposed inference method to test the treatment effect $\theta^e_D$ (upper tail ES regression coefficient associated with treatment variable $D$) at $\tau = 0.7, 0.75$ and $0.8$, and the results are summarized in Table \ref{tab:CI_OK}. The data indicates that there's a tendency that the merit award has a positive effect on the academic performance. However, none of the methods show that the treatment is statistically significant on the CES based on the original data, which might be due to the small sample size. To determine the sample size needed for different methods to capture the treatment difference, we conduct a power analysis.

In order to mimic the response distribution based on the linearity model assumption, we fit a linear quantile regression model using the original data at $\tau = 0.75$,
\begin{equation}
\label{equ:qr_OK}
    \hat{Q}_\tau(Y | D, X_1, \dots, X_6) = \hat{\beta}_0(\tau) + \hat{\beta}_1(\tau)D + \sum_{i=1}^6 \hat{\alpha}_i(\tau) X_i.
\end{equation}
We then obtain quantile residuals, defined as $\hat{\epsilon}_i(\tau) = Y_i - \hat{Q}_\tau(Y_i | D_i, X_1, \dots, X_6)$, and perform a heterogeneity analysis similar to the one in Section \ref{subsec:real_data_CPS_ES}. The results indicate that the residual term depends on $X_4$. Therefore, for power analysis, we focus on $S$-$NID$ and $W$-$NID$ approaches, and the response is generated by
\begin{equation}
    \label{equ:sim_res_OK}
    Y_{\text{sim}} = \hat{\beta}_0(\tau) + \hat{\beta}_1(\tau)D + \sum_{i=1}^6 \hat{\alpha}_i(\tau) \tilde{X}_i^d + \xi \cdot I(D = 1) + \tilde{\epsilon},
\end{equation}
where $\hat{\beta}_0(\tau), \hat{\beta}_1(\tau)$ and $\hat{\alpha}_i(\tau)$ are regression coefficient estimators in \eqref{equ:qr_OK}, $\tilde{X}_i^d$ follows the empirical distribution of covariate $X_i$ in group $D = d$, and $\tilde{\epsilon}$ is randomly sampled from the quantile residuals $\{\hat{\epsilon}_i(\tau)\}$ stratified by different grouped values of $X_4$. Since the original treatment effect is not statistically significant, we add an additional signal $\xi$ in \eqref{equ:sim_res_OK} to increase the treatment difference, and let the sample size of the two treatment groups be the same. 
We apply the proposed \textit{S-NID} method to the simulated data. Table \ref{tab:sample_size_OK} summarizes the sample size needed for different methods to reach a power of 0.9 at $\tau = 0.75$. The results show that the score test is clearly outperforming the Wald test and the bootstrap method, and the latter two require a trial with more subjects.

\begin{table}[H]
    \centering
    \caption{Point estimation and the 95\% confidence intervals (within the parentheses) of $\theta^e_D$ given by different approaches based on the original OK data, where $\theta^e_D$ is the ES regression coefficient associated with treatment variable $D$ and measures the treatment effect of the merit award on the upper tail CES.}
    \label{tab:CI_OK}
    \begin{tabular}{ccccc}
    \hline
    $\tau$ & \textit{S-NID} & \textit{W-NID} & \textit{BOOT} & \textit{COVES} \\ 
    \hline
    \multirow{2}{*}{0.70} & 0.298 & 0.375 & 0.375 & 0.215 \\
     & (-0.086, 0.651) & (-0.001, 0.751) & (-0.327, 1.078) & (-0.076, 0.506) \\
     \multirow{2}{*}{0.75} & 0.286 & 0.304 & 0.304 & 0.192 \\
     & (-0.126, 0.678) & (-0.116, 0.724) & (-0.379, 0.987) & (-0.124, 0.507) \\
     \multirow{2}{*}{0.80} & 0.278 & 0.204 & 0.204 & 0.207 \\
     & (-0.190, 0.726) & (-0.229, 0.637) & (-0.469, 0.878) & (-0.151, 0.564) \\
     \hline
    \end{tabular}
\end{table}

\begin{table}[H]
	\centering
	\caption{Sample size needed for each treatment group in the simulated OK data to reach power 0.9 at $\tau = 0.75$.}
	\label{tab:sample_size_OK}
	\begin{tabular}{ccccccc}
		\hline
		$\xi$ & \textit{S-NID} & \textit{W-NID} & \textit{BOOT} & \textit{COVES} \\ 
		\hline
		0.2 & 436 & 484 & 494 & 402 \\ 
		0.3 & 282 & 292 & 314 & 236 \\ 
		0.4 & 210 & 215 & 218 & 171 \\ 
		0.5 & 134 & 155 & 168 & 128 \\ 
		0.6 & 107 & 120 & 129 & 100 \\ 
		0.7 & 82 & 94 & 101 & 82 \\ 
		\hline
	\end{tabular}
\end{table}

\section{Conclusion}\label{sec:conclusion_ES}
In this paper, we considered the joint modeling of conditional quantile and ES. A two-step estimation procedure is proposed to reduce the computational effort.
We showed that the resulting two-step estimator is asymptotically equivalent to the joint estimator, but the former is numerically more efficient. In addition, the two-step estimator is locally robust to the perturbation of the quantile estimation in the first step.
We further developed a score-type inference method for hypothesis testing and confidence interval construction. The proposed score method is robust in performance, especially for cases with a large number of confounding factors and heterogeneous errors.

We chose parametric linear models for the joint-regression framework due to its computational efficiency and model interpretability. This framework can be further extended by considering more general models. \cite{wang2018semi}, \cite{taylor2019forecasting} and \cite{patton2019dynamic} consider dynamic models for ES with autoregressive features. To model CES based on exogenous covariates, another feasible alternative is to employ some nonparametric or semiparametric models, e.g., varying coefficient models \citep{hastie1993varying} and generalized additive models \citep{hastie1990generalized}. The proposed two-step estimation procedure and inference methods can be adapted accordingly, but further theoretical and practical investigations are needed. 

\section{Appendix}\label{sec:appendix_ES}
\subsection{Finite moment conditions} \label{subsec:moment_cond}
For some constant $c > 0$, define a neighborhood of $\boldsymbol{\theta}^q_0$ as $U_c(\boldsymbol{\theta}^q_0) = \{\boldsymbol{\theta}^q \in \boldsymbol{\Theta}^q: \|\boldsymbol{\theta}^q - \boldsymbol{\theta}^q_0\| \le c\}$. Similarly, we denote a neighborhood of $\boldsymbol{\theta}^e_0$ as $U_c(\boldsymbol{\theta}^e_0)$ and a neighborhood of $\boldsymbol{\theta}_0 = (\boldsymbol{\theta}^{q \prime}_0, \boldsymbol{\theta}^{e \prime}_0)^\prime$ as $U_c(\boldsymbol{\theta}_0)$. 

\begin{enumerate} \label{cond:moment}
    \item [($\mathcal{M}$-1)] 
    We assume the following moments are finite for a given constant $c > 0$: $E\big\{|G_1(Y)|\big\}$, $E\big\{|a(Y)|\big\}$, $E\big\{\|\mathbf{X}\|^r \sup_{\boldsymbol{\theta}^q \in U_c(\boldsymbol{\theta}^q_0)}|G_1^{(1)}(\mathbf{X}^\prime \boldsymbol{\theta}^q)|^r\big\}$ with $r = 1$ and 2, \\ $E\big\{\|\mathbf{X}\|^2 \sup_{\boldsymbol{\theta} \in U_c(\boldsymbol{\theta}_0)}|G_1^{(1)}(\mathbf{X}^\prime \boldsymbol{\theta}^q) G_2(\mathbf{X}^\prime \boldsymbol{\theta}^e)|\big\}$, \\
    $E\big\{\|\mathbf{X}\|^r \sup_{\boldsymbol{\theta}^e \in U_c(\boldsymbol{\theta}^e_0)}|G_2^{(1)}(\mathbf{X}^\prime \boldsymbol{\theta}^e)|^r\big\}$ with $r = 1$ and 2,\\ $E\big\{\|\mathbf{X}\|^{2r} \sup_{\boldsymbol{\theta}^e \in U_c(\boldsymbol{\theta}^e_0)}|G_2^{(1)}(\mathbf{X}^\prime \boldsymbol{\theta}^e)|^r\big\}$ with $r = 1$ and 2, \\ $E\big\{\|\mathbf{X}\| \sup_{\boldsymbol{\theta}^e \in U_c(\boldsymbol{\theta}^e_0)}|G_2^{(1)}(\mathbf{X}^\prime \boldsymbol{\theta}^e)|E(|Y| \big | \mathbf{X}) \big\}$, \\ $E\big[\|\mathbf{X}\|^3 \sup_{\boldsymbol{\theta}^e \in U_c(\boldsymbol{\theta}^e_0)}\big\{G_2^{(1)}(\mathbf{X}^\prime \boldsymbol{\theta}^e)\big\}^2E(|Y| \big | \mathbf{X}) \big]$ and \\ $E\big[\|\mathbf{X}\|^2 \sup_{\boldsymbol{\theta}^e \in U_c(\boldsymbol{\theta}^e_0)}\big\{G_2^{(1)}(\mathbf{X}^\prime \boldsymbol{\theta}^e)\big\}^2E(Y^2 \big | \mathbf{X}) \big]$.
\end{enumerate}

\subsection{Proofs of Theorems}\label{subsec:proof_of_thm}
Let $\omega_\tau(Y, \mathbf{X}, \boldsymbol{\theta}^q, \boldsymbol{\theta}^e)$ be the derivative of $\rho_\tau(Y, \mathbf{X}, \boldsymbol{\theta}^q, \boldsymbol{\theta}^e)$ w.r.t. $\boldsymbol{\theta}^e$. That is,
\begin{align*}
    \omega_\tau(Y, \mathbf{X}, \boldsymbol{\theta}^q, \boldsymbol{\theta}^e) &= \frac{\partial}{\partial (\boldsymbol{\theta}^e)^\prime} \rho_\tau(Y, \mathbf{X}, \boldsymbol{\theta}^q, \boldsymbol{\theta}^e) \\
    &= \mathbf{X} G_2^{(1)}(\mathbf{X}^\prime \boldsymbol{\theta}^e) \left\{\mathbf{X}^\prime \boldsymbol{\theta}^e - \mathbf{X}^\prime \boldsymbol{\theta}^q + \frac{(\mathbf{X}^\prime \boldsymbol{\theta}^q - Y) I(Y \le \mathbf{X}^\prime \boldsymbol{\theta}^q)}{\tau}\right\}.
\end{align*}
For simplicity, denote $\omega_\tau(Y_i, \mathbf{X}_i, \boldsymbol{\theta}^q, \boldsymbol{\theta}^e)$ by $\omega_i(\boldsymbol{\theta}^q, \boldsymbol{\theta}^e)$ for subject $i$. To derive the asymptotic behavior of $\sqrt{n} (\hat{\boldsymbol{\theta}}^e - \boldsymbol{\theta}^e_0)$, we first present and prove three lemmas.

\begin{lemma} \label{lemma:sep_asy_normality_lemma1}
 Assume the conditions in Theorem \ref{thm:sep_asy_normality} hold. Then we have \begin{equation}
 \label{equ:sep_asy_normality_lemma1}
     \frac{1}{n} \sum_{i=1}^n \frac{\partial}{\partial (\boldsymbol{\theta}^e)^\prime} \omega_i(\hat{\boldsymbol{\theta}}^q, \boldsymbol{\theta}^e) \big|_{\boldsymbol{\theta}^e = \boldsymbol{\theta}^e_0} \overset{P}{\to} E\big\{\mathbf{X} \mathbf{X}^\prime G_2^{(1)}(\mathbf{X}^\prime \boldsymbol{\theta}^e_0)\big\}.
 \end{equation}
\end{lemma}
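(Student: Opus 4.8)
The plan is to compute the $\boldsymbol{\theta}^e$-derivative of $\omega_i$ explicitly, split the resulting average into a term evaluated at the true quantile parameter plus a remainder driven by $\hat{\boldsymbol{\theta}}^q - \boldsymbol{\theta}^q_0$, and then control each piece by a law of large numbers together with the $\sqrt{n}$-consistency of $\hat{\boldsymbol{\theta}}^q$ and the moment conditions $(\mathcal{M}\text{-}1)$. Differentiating $\omega_\tau(Y,\mathbf{X},\boldsymbol{\theta}^q,\boldsymbol{\theta}^e)$ with respect to $\boldsymbol{\theta}^e$ produces two contributions: one from differentiating the leading factor $\mathbf{X} G_2^{(1)}(\mathbf{X}'\boldsymbol{\theta}^e)$, which gives $\mathbf{X}\mathbf{X}' G_2^{(2)}(\mathbf{X}'\boldsymbol{\theta}^e)\{\mathbf{X}'\boldsymbol{\theta}^e - \mathbf{X}'\boldsymbol{\theta}^q + \tau^{-1}(\mathbf{X}'\boldsymbol{\theta}^q - Y)I(Y\le \mathbf{X}'\boldsymbol{\theta}^q)\}$, and one from differentiating the brace, which gives $\mathbf{X}\mathbf{X}' G_2^{(1)}(\mathbf{X}'\boldsymbol{\theta}^e)$. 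The key observation is that when $\boldsymbol{\theta}^e = \boldsymbol{\theta}^e_0$ and $\boldsymbol{\theta}^q = \boldsymbol{\theta}^q_0$, the conditional expectation of the bracketed quantity in the first contribution vanishes: $E[\mathbf{X}'\boldsymbol{\theta}^e_0 - \mathbf{X}'\boldsymbol{\theta}^q_0 + \tau^{-1}(\mathbf{X}'\boldsymbol{\theta}^q_0 - Y)I(Y \le \mathbf{X}'\boldsymbol{\theta}^q_0)\mid \mathbf{X}] = 0$ by the identification condition $ES_\tau(u^e\mid\mathbf{X})=0$ together with $Q_\tau(u^q\mid\mathbf{X})=0$.

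First I would replace $\hat{\boldsymbol{\theta}}^q$ by $\boldsymbol{\theta}^q_0$ inside the first contribution at the cost of a remainder. Writing the bracket as a function that is Lipschitz in $\boldsymbol{\theta}^q$ up to the jump of the indicator, and noting $\hat{\boldsymbol{\theta}}^q - \boldsymbol{\theta}^q_0 = O_p(n^{-1/2})$ by A4, the difference is $o_p(1)$ after averaging, using the moment bounds on $\|\mathbf{X}\|^2 \sup|G_2^{(2)}|$ and $\|\mathbf{X}\|^2\sup|G_2^{(1)}|$ that are subsumed in $(\mathcal{M}\text{-}1)$ (or follow from A3 with the $G$-function $\mathcal{G}_2(z) = -\log(-z)$ having bounded higher derivatives on the relevant region, after a standard truncation/uniform-neighborhood argument). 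Second, for the now-fixed quantile parameter $\boldsymbol{\theta}^q_0$, I would apply the weak law of large numbers for i.i.d.\ data (A2) to conclude that $n^{-1}\sum_i \mathbf{X}_i\mathbf{X}_i' G_2^{(2)}(\mathbf{X}_i'\boldsymbol{\theta}^e_0)\{\cdots\}_i \overset{P}{\to} E[\mathbf{X}\mathbf{X}' G_2^{(2)}(\mathbf{X}'\boldsymbol{\theta}^e_0)(\cdots)]$, and this limit is $\mathbf{0}$ by the vanishing-conditional-expectation fact above (integrability is guaranteed by the moment conditions involving $E(|Y|\mid\mathbf{X})$ and $E(Y^2\mid\mathbf{X})$). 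Third, the second contribution $n^{-1}\sum_i \mathbf{X}_i\mathbf{X}_i' G_2^{(1)}(\mathbf{X}_i'\boldsymbol{\theta}^e_0) \overset{P}{\to} E\{\mathbf{X}\mathbf{X}' G_2^{(1)}(\mathbf{X}'\boldsymbol{\theta}^e_0)\}$ directly by the WLLN, using finiteness of $E\{\|\mathbf{X}\|^2 \sup_{\boldsymbol{\theta}^e}|G_2^{(1)}(\mathbf{X}'\boldsymbol{\theta}^e)|\}$ from $(\mathcal{M}\text{-}1)$. Combining the three steps yields \eqref{equ:sep_asy_normality_lemma1}.

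I expect the main obstacle to be the first step: rigorously bounding the remainder created by substituting $\hat{\boldsymbol{\theta}}^q$ for $\boldsymbol{\theta}^q_0$, because the indicator $I(Y\le \mathbf{X}'\boldsymbol{\theta}^q)$ is not continuous in $\boldsymbol{\theta}^q$, so one cannot simply Taylor-expand. The standard remedy is to take conditional expectations first — the \emph{conditional} expectation of the indicator is $F_{Y|\mathbf{X}}(\mathbf{X}'\boldsymbol{\theta}^q)$, which is smooth by A2 — and to handle the difference between the empirical average and its conditional mean by a stochastic-equicontinuity / uniform-convergence argument over the shrinking neighborhood $U_c(\boldsymbol{\theta}^q_0)$, invoking the moment conditions to dominate the relevant envelope. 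An alternative, and probably cleaner, route is to argue that the term being differentiated is itself $o_p(1)$ uniformly in a neighborhood, so that the only surviving piece of the derivative is the second contribution; this is the structural reason the quantile estimation error does not propagate, and it is exactly the local-robustness phenomenon flagged in the second Remark.
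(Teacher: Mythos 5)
Your proposal is correct and follows essentially the same route as the paper: compute the derivative explicitly, observe that the $G_2^{(2)}$ contribution has vanishing conditional expectation at $(\boldsymbol{\theta}^q_0,\boldsymbol{\theta}^e_0)$ by the identification conditions, and show the perturbation from replacing $\boldsymbol{\theta}^q_0$ by $\hat{\boldsymbol{\theta}}^q$ is negligible via consistency plus a Taylor expansion of $F_{Y|\mathbf{X}}$ for the indicator difference. If anything, your explicit flagging of the uniformity issue (the summands depend on $\hat{\boldsymbol{\theta}}^q$, so a plain i.i.d.\ LLN does not directly apply) is handled more carefully than in the paper, which simply invokes the LLN before controlling the expectation.
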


\begin{proof}
By LLN (law of large numbers), we have \begin{equation*}
    \frac{1}{n} \sum_{i=1}^n \frac{\partial}{\partial (\boldsymbol{\theta}^e)^\prime} \omega_i(\hat{\boldsymbol{\theta}}^q, \boldsymbol{\theta}^e) \big|_{\boldsymbol{\theta}^e = \boldsymbol{\theta}^e_0} \overset{a.s.}{\to} E\left\{\frac{\partial}{\partial (\boldsymbol{\theta}^e)^\prime} \omega(Y, \mathbf{X}, \hat{\boldsymbol{\theta}}^q, \boldsymbol{\theta}^e) \big|_{\boldsymbol{\theta}^e = \boldsymbol{\theta}^e_0}\right\},
\end{equation*}
where \begin{align*}
    &\frac{\partial}{\partial (\boldsymbol{\theta}^e)^\prime} \omega(Y, \mathbf{X}, \hat{\boldsymbol{\theta}}^q, \boldsymbol{\theta}^e) \big|_{\boldsymbol{\theta}^e = \boldsymbol{\theta}^e_0} \\ 
    &= \mathbf{X} \mathbf{X}^\prime \left[G_2^{(1)}(\mathbf{X}^\prime \boldsymbol{\theta}^e_0) + G_2^{(2)}(\mathbf{X}^\prime \boldsymbol{\theta}^e_0)\left\{\mathbf{X}^\prime \boldsymbol{\theta}^e_0 - \mathbf{X}^\prime \hat{\boldsymbol{\theta}}^q + \frac{(\mathbf{X}^\prime \hat{\boldsymbol{\theta}}^q - Y) I(Y \le \mathbf{X}^\prime \hat{\boldsymbol{\theta}}^q)}{\tau}\right\}\right].
\end{align*}
It then sufficient to show that the expectation of the second term in the above equation is $o(1)$. Notice that
\begin{align*}
    & E\left[\mathbf{X} \mathbf{X}^\prime G_2^{(2)}(\mathbf{X}^\prime \boldsymbol{\theta}^e_0)\left\{\mathbf{X}^\prime \boldsymbol{\theta}^e_0 - \mathbf{X}^\prime \hat{\boldsymbol{\theta}}^q + \frac{(\mathbf{X}^\prime \hat{\boldsymbol{\theta}}^q - Y) I(Y \le \mathbf{X}^\prime \hat{\boldsymbol{\theta}}^q)}{\tau}\right\}\right] \\
    &= E\left[\mathbf{X} \mathbf{X}^\prime G_2^{(2)}(\mathbf{X}^\prime \boldsymbol{\theta}^e_0) \cdot E\left\{\mathbf{X}^\prime \boldsymbol{\theta}^e_0 - \mathbf{X}^\prime \hat{\boldsymbol{\theta}}^q + \frac{(\mathbf{X}^\prime \hat{\boldsymbol{\theta}}^q - Y) I(Y \le \mathbf{X}^\prime \hat{\boldsymbol{\theta}}^q)}{\tau} \big| \mathbf{X}\right\}\right],
\end{align*}
where \begin{align*}
    & E\big\{\mathbf{X}^\prime \boldsymbol{\theta}^e_0 - \mathbf{X}^\prime \hat{\boldsymbol{\theta}}^q + \frac{(\mathbf{X}^\prime \hat{\boldsymbol{\theta}}^q - Y) I(Y \le \mathbf{X}^\prime \hat{\boldsymbol{\theta}}^q)}{\tau} \big| \mathbf{X}\big\} \\
    &= E\left[\mathbf{X}^\prime (\boldsymbol{\theta}^q_0 - \hat{\boldsymbol{\theta}}^q) + \frac{1}{\tau} \big\{\mathbf{X}^\prime(\hat{\boldsymbol{\theta}}^q - \boldsymbol{\theta}^q_0)I(Y \le \mathbf{X}^\prime \hat{\boldsymbol{\theta}}^q) + (\mathbf{X}^\prime \boldsymbol{\theta}^q_0 - Y) (I(Y \le \mathbf{X}^\prime \hat{\boldsymbol{\theta}}^q) - I(Y \le \mathbf{X}^\prime \boldsymbol{\theta}^q_0)) \big\} \big| \mathbf{X} \right].
\end{align*}
For the above equation, it's easy to verify that the terms involving $\mathbf{X}^\prime (\boldsymbol{\theta}^q_0 - \hat{\boldsymbol{\theta}}^q)$ are $o_p(1)$ due to the consistency of $\hat{\boldsymbol{\theta}}^q$. Besides, by Taylor expansion we have
\begin{align*}
    E\big\{I(Y \le \mathbf{X}^\prime \hat{\boldsymbol{\theta}}^q) - I(Y \le \mathbf{X}^\prime \boldsymbol{\theta}^q_0) \big| \mathbf{X} \big\} &= F_{Y | \mathbf{X}} (\mathbf{X}^\prime \hat{\boldsymbol{\theta}}^q) - F_{Y | \mathbf{X}} (\mathbf{X}^\prime \boldsymbol{\theta}^q_0) \\
    &= f_{Y | \mathbf{X}} (\mathbf{X}^\prime \boldsymbol{\theta}^q_0) \mathbf{X}^\prime(\hat{\boldsymbol{\theta}}^q - \boldsymbol{\theta}^q_0) + o\big\{\mathbf{X}^\prime(\hat{\boldsymbol{\theta}}^q - \boldsymbol{\theta}^q_0)\big\}\\
    &= o_p(1).
\end{align*}
Therefore, with the assumption that the conditional distribution of $Y$ given $\mathbf{X}$ has finite second moment, the last term $E\big\{Y(I(Y \le \mathbf{X}^\prime \hat{\boldsymbol{\theta}}^q) - I(Y \le \mathbf{X}^\prime \boldsymbol{\theta}^q_0)) \big| \mathbf{X} \big\} = o_p(1)$ holds by Cauchy-Schwarz inequality.
\end{proof}

\begin{lemma} \label{lemma:sep_asy_normality_lemma2}
 Assume the conditions in Theorem \ref{thm:sep_asy_normality} hold. Then we have \begin{equation}
 \label{equ:sep_asy_normality_lemma2}
     \underset{\hat{\boldsymbol{\theta}}^q: ||\hat{\boldsymbol{\theta}}^q - \boldsymbol{\theta}_0^q|| \le c \cdot n^{-1/2}}{\sup}\left \| \frac{1}{\sqrt{n}} \sum_{i=1}^{n} \left[ \big\{\omega_i(\hat{\boldsymbol{\theta}}^q, \boldsymbol{\theta}_0^e) - \omega_i(\boldsymbol{\theta}_0^q, \boldsymbol{\theta}_0^e)\big\} - \big\{\lambda_i(\hat{\boldsymbol{\theta}}^q)  - \lambda_i(\boldsymbol{\theta}_0^q)\big\} \right] \right \| 
   \overset{P}{\to} 0,
 \end{equation}
 where $c$ is a positive constant and
   \begin{equation*}
       \lambda_i(\boldsymbol{\theta}^q) = E\big\{\omega(Y_i, \mathbf{X}_i, \boldsymbol{\theta}^q, \boldsymbol{\theta}_0^e) | \mathbf{X}_i\big\}.
   \end{equation*}
\end{lemma}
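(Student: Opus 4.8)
The plan is to read the left-hand side of \eqref{equ:sep_asy_normality_lemma2} as an asymptotic stochastic equicontinuity statement, at $\boldsymbol{\theta}^q_0$, for the conditionally centered empirical process $\boldsymbol{\theta}^q\mapsto n^{-1/2}\sum_{i=1}^n\{\omega_i(\boldsymbol{\theta}^q,\boldsymbol{\theta}^e_0)-\lambda_i(\boldsymbol{\theta}^q)\}$, evaluated over the shrinking ball $\{\boldsymbol{\theta}^q:\|\boldsymbol{\theta}^q-\boldsymbol{\theta}^q_0\|\le cn^{-1/2}\}$. First I would strip off the part of $\omega_i$ that is smooth in $\boldsymbol{\theta}^q$. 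Writing $\omega_i(\boldsymbol{\theta}^q,\boldsymbol{\theta}^e_0)=\mathbf{X}_i G_2^{(1)}(\mathbf{X}_i^\prime\boldsymbol{\theta}^e_0)\{\mathbf{X}_i^\prime\boldsymbol{\theta}^e_0-\mathbf{X}_i^\prime\boldsymbol{\theta}^q+\tau^{-1}g_i(\boldsymbol{\theta}^q)\}$ with $g_i(\boldsymbol{\theta}^q):=(\mathbf{X}_i^\prime\boldsymbol{\theta}^q-Y_i)I(Y_i\le\mathbf{X}_i^\prime\boldsymbol{\theta}^q)$, the only summand that fails to be $\sigma(\mathbf{X}_i)$-measurable is $\tau^{-1}\mathbf{X}_i G_2^{(1)}(\mathbf{X}_i^\prime\boldsymbol{\theta}^e_0)g_i(\boldsymbol{\theta}^q)$; every other summand equals its own conditional expectation given $\mathbf{X}_i$ and therefore cancels exactly in the centered difference. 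Hence the quantity inside the norm equals $n^{-1/2}\sum_{i=1}^n\{\mu_i(\hat{\boldsymbol{\theta}}^q)-\mu_i(\boldsymbol{\theta}^q_0)\}$ with $\mu_i(\boldsymbol{\theta}^q):=\tau^{-1}\mathbf{X}_i G_2^{(1)}(\mathbf{X}_i^\prime\boldsymbol{\theta}^e_0)\{g_i(\boldsymbol{\theta}^q)-E(g_i(\boldsymbol{\theta}^q)\,|\,\mathbf{X}_i)\}$, i.e. a sum of i.i.d., conditionally mean-zero terms, and it suffices to prove the displayed convergence for this reduced process.

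Next I would exploit that $g_i(\boldsymbol{\theta}^q)=(\mathbf{X}_i^\prime\boldsymbol{\theta}^q-Y_i)_+$ is a $1$-Lipschitz function of the linear index $\mathbf{X}_i^\prime\boldsymbol{\theta}^q-Y_i$, so that $\{\boldsymbol{\theta}^q\mapsto g_i(\boldsymbol{\theta}^q):\boldsymbol{\theta}^q\in U_c(\boldsymbol{\theta}^q_0)\}$ is a VC-subgraph (Euclidean) class; multiplying by the $\boldsymbol{\theta}^q$-free factor $\tau^{-1}\mathbf{X}_i G_2^{(1)}(\mathbf{X}_i^\prime\boldsymbol{\theta}^e_0)$ and subtracting the ($\boldsymbol{\theta}^q$-smooth) conditional mean preserve the Euclidean property by the usual permanence rules. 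An envelope for $\{\mu_i(\boldsymbol{\theta}^q)\}$ is $F(Y,\mathbf{X})=C\|\mathbf{X}\|\,|G_2^{(1)}(\mathbf{X}^\prime\boldsymbol{\theta}^e_0)|\,(|Y|+\|\mathbf{X}\|)$ for a constant $C$ depending only on $c$, and $E\{F^2\}<\infty$ follows from the moment conditions ($\mathcal{M}$-1), in particular $E\{\|\mathbf{X}\|^2\sup_{\boldsymbol{\theta}^e\in U_c(\boldsymbol{\theta}^e_0)}\{G_2^{(1)}(\mathbf{X}^\prime\boldsymbol{\theta}^e)\}^2E(Y^2|\mathbf{X})\}$ and $E\{\|\mathbf{X}\|^{4}\sup_{\boldsymbol{\theta}^e\in U_c(\boldsymbol{\theta}^e_0)}\{G_2^{(1)}(\mathbf{X}^\prime\boldsymbol{\theta}^e)\}^2\}$. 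For a Euclidean class with a square-integrable envelope, a maximal inequality for empirical processes (Pollard's stochastic equicontinuity lemma, or the analogous result in van der Vaart and Wellner) gives, for every deterministic sequence $\delta_n\downarrow0$,
\begin{equation*}
\underset{\varrho(\boldsymbol{\theta}^q,\boldsymbol{\theta}^q_0)\le\delta_n}{\sup}\left\|\frac{1}{\sqrt n}\sum_{i=1}^n\{\mu_i(\boldsymbol{\theta}^q)-\mu_i(\boldsymbol{\theta}^q_0)\}\right\|\overset{P}{\to}0,\qquad \varrho^2(\boldsymbol{\theta}^q,\boldsymbol{\theta}^q_0):=E\big\|\mu(\boldsymbol{\theta}^q)-\mu(\boldsymbol{\theta}^q_0)\big\|^2.
\end{equation*}

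It then remains only to verify the $L^2$-continuity $\varrho(\boldsymbol{\theta}^q,\boldsymbol{\theta}^q_0)\to0$ as $\boldsymbol{\theta}^q\to\boldsymbol{\theta}^q_0$, which forces $\delta_n:=\sup_{\|\boldsymbol{\theta}^q-\boldsymbol{\theta}^q_0\|\le cn^{-1/2}}\varrho(\boldsymbol{\theta}^q,\boldsymbol{\theta}^q_0)\to0$, so that the displayed bound applies because $\{\|\boldsymbol{\theta}^q-\boldsymbol{\theta}^q_0\|\le cn^{-1/2}\}\subseteq\{\varrho(\boldsymbol{\theta}^q,\boldsymbol{\theta}^q_0)\le\delta_n\}$. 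For this I would use the split $g_i(\boldsymbol{\theta}^q)-g_i(\boldsymbol{\theta}^q_0)=\mathbf{X}_i^\prime(\boldsymbol{\theta}^q-\boldsymbol{\theta}^q_0)I(Y_i\le\mathbf{X}_i^\prime\boldsymbol{\theta}^q)+(\mathbf{X}_i^\prime\boldsymbol{\theta}^q_0-Y_i)\{I(Y_i\le\mathbf{X}_i^\prime\boldsymbol{\theta}^q)-I(Y_i\le\mathbf{X}_i^\prime\boldsymbol{\theta}^q_0)\}$ (the same decomposition used in the proof of Lemma \ref{lemma:sep_asy_normality_lemma1}): the first term, after conditional centering, contributes an $L^2$ norm of order $\|\boldsymbol{\theta}^q-\boldsymbol{\theta}^q_0\|$ by the ($\mathcal{M}$-1) moment bounds, while in the second term the indicator difference is supported on the event that $Y_i$ lies between $\mathbf{X}_i^\prime\boldsymbol{\theta}^q_0$ and $\mathbf{X}_i^\prime\boldsymbol{\theta}^q$, whose conditional probability given $\mathbf{X}_i$ is at most $\{\sup_y f_{Y|\mathbf{X}_i}(y)\}\,|\mathbf{X}_i^\prime(\boldsymbol{\theta}^q-\boldsymbol{\theta}^q_0)|$ by A2; applying the Cauchy-Schwarz inequality conditionally on $\mathbf{X}_i$, using $E(Y^2|\mathbf{X})<\infty$, and invoking dominated convergence with the ($\mathcal{M}$-1) envelopes shows its $L^2$ norm vanishes as well. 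I expect the second term to be the main obstacle: it multiplies the potentially unbounded factor $\mathbf{X}_i^\prime\boldsymbol{\theta}^q_0-Y_i$ by an indicator of a shell of vanishing probability, so its uniform control genuinely requires combining the half-space VC structure, the boundedness of $f_{Y|\mathbf{X}}$ near the conditional quantile (A2), and the finite second conditional moment of $Y$. If one prefers to avoid abstract entropy bounds, an equivalent route is to cover the $n^{-1/2}$-ball by an $o(n^{-1/2})$-net of polynomial cardinality, control each net point via a Bernstein/Bennett inequality using the $L^2$ bound above, and interpolate between net points using the monotonicity of $u\mapsto I(Y\le u)$, in the spirit of the standard quantile-regression equicontinuity argument; this trades a softer tool for more bookkeeping.
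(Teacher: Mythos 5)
Your proposal takes essentially the same route as the paper: the paper's entire proof consists of invoking stochastic equicontinuity of the conditionally centered process $n^{-1/2}\sum_{i}\{\omega_i(\boldsymbol{\theta}^q,\boldsymbol{\theta}^e_0)-\lambda_i(\boldsymbol{\theta}^q)\}$ over the shrinking neighborhood (citing a remark in Doukhan, Massart and Rio, 1995), which is exactly the reading you adopt. The difference is only in completeness --- you actually verify the hypotheses (reduction to the single non-$\sigma(\mathbf{X}_i)$-measurable term, the VC-subgraph structure, the square-integrable envelope from ($\mathcal{M}$-1), and the $L^2$-continuity at $\boldsymbol{\theta}^q_0$) that the paper leaves implicit in the citation, and your argument is correct.
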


\begin{proof}
According to the Remark on page 410 of \cite{doukhan1995invariance}, we can obtain the stochastic equicontinuity of $n^{-1/2} \sum_{i=1}^n \big\{\omega_i(\boldsymbol{\theta}^q, \boldsymbol{\theta}^e_0) - \lambda_i(\boldsymbol{\theta}^q)\big\}$. That is, for any $\epsilon > 0$, there exist a $\delta_\epsilon > 0$ such that
\begin{align*}
    \underset{n \to \infty}{\limsup} P\Bigg(\underset{\hat{\boldsymbol{\theta}}^q: ||\hat{\boldsymbol{\theta}}^q - \boldsymbol{\theta}_0^q|| \le \delta_\epsilon}{\sup} \Bigg\|\frac{1}{\sqrt{n}} \sum_{i=1}^n \big[ \big\{\omega_i&(\hat{\boldsymbol{\theta}}^q, \boldsymbol{\theta}_0^e) - \lambda_i(\hat{\boldsymbol{\theta}}^q)\big\} - \big\{\omega_i(\boldsymbol{\theta}_0^q, \boldsymbol{\theta}_0^e) - \lambda_i(\boldsymbol{\theta}_0^q)\big\} \big] \Bigg\| > \epsilon \Bigg) < \epsilon,
\end{align*}
which implies the desired result.
\end{proof}

\begin{lemma} \label{lemma:sep_asy_normality_lemma3}
 Under the conditions in Theorem \ref{thm:sep_asy_normality}, we have \begin{equation}
 \label{equ:sep_asy_normality_lemma3}
     \frac{1}{\sqrt{n}}\sum_{i=1}^{n} \big\{\lambda_i(\hat{\boldsymbol{\theta}}^q) - \lambda_i(\boldsymbol{\theta}_0^q)\big\} = \frac{1}{\sqrt{n}}\sum_{i=1}^{n}\big[E\big\{\omega(Y_i, \mathbf{X}_i, \hat{\boldsymbol{\theta}}^q, \boldsymbol{\theta}_0^e) | \mathbf{X}_i\big\} - E\big\{\omega(Y_i, \mathbf{X}_i, \boldsymbol{\theta}^q_0, \boldsymbol{\theta}_0^e) | \mathbf{X}_i\big\}\big] \overset{P}{\to} 0.
 \end{equation}
\end{lemma}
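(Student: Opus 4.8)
The plan is to reduce the sum on the left of \eqref{equ:sep_asy_normality_lemma3} to a purely second-order quantity in $\hat{\boldsymbol{\theta}}^q-\boldsymbol{\theta}^q_0$ by showing that the first-order term vanishes identically, and then to control that quantity using the $\sqrt n$-rate of $\hat{\boldsymbol{\theta}}^q$ together with the moment conditions $(\mathcal{M}$-1$)$.

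First I would compute $\lambda_i$ in closed form. Conditioning on $\mathbf{X}_i$ and writing $F_i=F_{Y|\mathbf{X}_i}$, the only random part of $\omega(Y_i,\mathbf{X}_i,\boldsymbol{\theta}^q,\boldsymbol{\theta}^e_0)$ has conditional expectation $\tau^{-1}g_i(\mathbf{X}_i^\prime\boldsymbol{\theta}^q)$ with $g_i(t)=tF_i(t)-\int_{-\infty}^{t}y\,dF_i(y)$, so that
\begin{equation*}
\lambda_i(\boldsymbol{\theta}^q)=\mathbf{X}_i\,G_2^{(1)}(\mathbf{X}_i^\prime\boldsymbol{\theta}^e_0)\Big\{\mathbf{X}_i^\prime\boldsymbol{\theta}^e_0-\mathbf{X}_i^\prime\boldsymbol{\theta}^q+\frac{1}{\tau}\,g_i(\mathbf{X}_i^\prime\boldsymbol{\theta}^q)\Big\}.
\end{equation*}
The key observation is $g_i^\prime(t)=F_i(t)$ (the $tf_i(t)$ terms cancel), together with the fact that the model forces $\mathbf{X}_i^\prime\boldsymbol{\theta}^q_0=Q_\tau(Y\mid\mathbf{X}_i)$ and hence $F_i(\mathbf{X}_i^\prime\boldsymbol{\theta}^q_0)=\tau$ by A2. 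Consequently $\partial\lambda_i(\boldsymbol{\theta}^q)/\partial\boldsymbol{\theta}^{q\prime}\big|_{\boldsymbol{\theta}^q_0}=\mathbf{X}_i\mathbf{X}_i^\prime G_2^{(1)}(\mathbf{X}_i^\prime\boldsymbol{\theta}^e_0)\{\tau^{-1}F_i(\mathbf{X}_i^\prime\boldsymbol{\theta}^q_0)-1\}=\mathbf{0}$, which is precisely the vanishing cross-derivative underlying the local-robustness property discussed after Theorem \ref{thm:sep_asy_normality}. Substituting $g_i(\mathbf{X}_i^\prime\hat{\boldsymbol{\theta}}^q)-g_i(\mathbf{X}_i^\prime\boldsymbol{\theta}^q_0)=\int_{\mathbf{X}_i^\prime\boldsymbol{\theta}^q_0}^{\mathbf{X}_i^\prime\hat{\boldsymbol{\theta}}^q}F_i(s)\,ds$ and peeling off the linear part $\tau\mathbf{X}_i^\prime(\hat{\boldsymbol{\theta}}^q-\boldsymbol{\theta}^q_0)$, which cancels the explicit $-\mathbf{X}_i^\prime(\hat{\boldsymbol{\theta}}^q-\boldsymbol{\theta}^q_0)$ exactly, I obtain the identity
\begin{equation*}
\lambda_i(\hat{\boldsymbol{\theta}}^q)-\lambda_i(\boldsymbol{\theta}^q_0)=\frac{1}{\tau}\,\mathbf{X}_i\,G_2^{(1)}(\mathbf{X}_i^\prime\boldsymbol{\theta}^e_0)\int_{\mathbf{X}_i^\prime\boldsymbol{\theta}^q_0}^{\mathbf{X}_i^\prime\hat{\boldsymbol{\theta}}^q}\big\{F_i(s)-\tau\big\}\,ds .
\end{equation*}

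Next I would bound the remainder uniformly. By A2 there exist $\bar f,\delta>0$ with $f_{Y|\mathbf{X}}\le\bar f$ on $\{|y-\mathbf{X}^\prime\boldsymbol{\theta}^q_0|\le\delta\}$; combining the mean-value bound $|F_i(s)-\tau|\le\bar f|s-\mathbf{X}_i^\prime\boldsymbol{\theta}^q_0|$ inside that window with the crude bound $|F_i(s)-\tau|\le 1\le\delta^{-1}|s-\mathbf{X}_i^\prime\boldsymbol{\theta}^q_0|$ outside it gives $|F_i(s)-\tau|\le L|s-\mathbf{X}_i^\prime\boldsymbol{\theta}^q_0|$ with $L=\max(\bar f,\delta^{-1})$ for every $s$ and every $\mathbf{X}_i$, so that by Cauchy--Schwarz $\|\lambda_i(\hat{\boldsymbol{\theta}}^q)-\lambda_i(\boldsymbol{\theta}^q_0)\|\le(L/2\tau)\|\mathbf{X}_i\|^3|G_2^{(1)}(\mathbf{X}_i^\prime\boldsymbol{\theta}^e_0)|\,\|\hat{\boldsymbol{\theta}}^q-\boldsymbol{\theta}^q_0\|^2$. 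Summing and dividing by $\sqrt n$,
\begin{equation*}
\Big\|\frac{1}{\sqrt n}\sum_{i=1}^n\big\{\lambda_i(\hat{\boldsymbol{\theta}}^q)-\lambda_i(\boldsymbol{\theta}^q_0)\big\}\Big\|\le\frac{L}{2\tau}\Big(\sqrt n\,\|\hat{\boldsymbol{\theta}}^q-\boldsymbol{\theta}^q_0\|^2\Big)\cdot\frac{1}{n}\sum_{i=1}^n\|\mathbf{X}_i\|^3\big|G_2^{(1)}(\mathbf{X}_i^\prime\boldsymbol{\theta}^e_0)\big| .
\end{equation*}
The first factor is $o_p(1)$ by A4 (only $\sqrt n\|\hat{\boldsymbol{\theta}}^q-\boldsymbol{\theta}^q_0\|^2=o_p(1)$ is used, which is the precise sense in which A4 can be weakened), and the second factor converges in probability by the law of large numbers to $E\{\|\mathbf{X}\|^3|G_2^{(1)}(\mathbf{X}^\prime\boldsymbol{\theta}^e_0)|\}$, which is finite because $\|\mathbf{X}\|^3|G_2^{(1)}|\le\frac{1}{2}(\|\mathbf{X}\|^2+\|\mathbf{X}\|^4|G_2^{(1)}|^2)$ with $E\|\mathbf{X}\|^2<\infty$ (A1) and $E\{\|\mathbf{X}\|^4|G_2^{(1)}(\mathbf{X}^\prime\boldsymbol{\theta}^e_0)|^2\}<\infty$ (the $r=2$ term of $(\mathcal{M}$-1$)$). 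The product is therefore $o_p(1)$, which is \eqref{equ:sep_asy_normality_lemma3}.

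I expect the main obstacle to be not the conceptual step — once $F_i(\mathbf{X}_i^\prime\boldsymbol{\theta}^q_0)=\tau$ is invoked the first-order term vanishes immediately — but the uniform-in-$\mathbf{X}$ control of $|F_i(s)-\tau|$ in the remainder bound: A2 provides a density bound only in a neighborhood of each conditional quantile, whereas $\mathbf{X}_i^\prime\hat{\boldsymbol{\theta}}^q$ can lie far from $\mathbf{X}_i^\prime\boldsymbol{\theta}^q_0$ when $\|\mathbf{X}_i\|$ is large, so one has to splice the local Lipschitz estimate with the trivial $[0,1]$ bound to produce a single constant $L$ valid for all $s$ and all $\mathbf{X}_i$, after which the remaining moments needed are exactly those listed in $(\mathcal{M}$-1$)$.
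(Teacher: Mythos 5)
Your proof is correct and rests on the same key fact as the paper's argument --- the first-order term in $\boldsymbol{\theta}^q$ vanishes because $F_{Y|\mathbf{X}_i}(\mathbf{X}_i^\prime\boldsymbol{\theta}^q_0)=\tau$, leaving a remainder of order $\|\hat{\boldsymbol{\theta}}^q-\boldsymbol{\theta}^q_0\|^2$ that is killed by $\sqrt n\|\hat{\boldsymbol{\theta}}^q-\boldsymbol{\theta}^q_0\|^2=o_p(1)$ --- but your execution differs in a useful way. The paper Taylor-expands the two pieces $k_i$ and $l_i$ separately and observes that their first-order terms cancel against the explicit $\mathbf{X}_i^\prime(\boldsymbol{\theta}^q_0-\hat{\boldsymbol{\theta}}^q)$, writing the leftover as an unquantified $O(\|\hat{\boldsymbol{\theta}}^q-\boldsymbol{\theta}^q_0\|^2)$ whose implicit constant depends on $\mathbf{X}_i$ and is never tracked when the terms are summed over $i$. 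You instead derive the exact identity
\begin{equation*}
\lambda_i(\hat{\boldsymbol{\theta}}^q)-\lambda_i(\boldsymbol{\theta}^q_0)=\tau^{-1}\mathbf{X}_i\,G_2^{(1)}(\mathbf{X}_i^\prime\boldsymbol{\theta}^e_0)\int_{\mathbf{X}_i^\prime\boldsymbol{\theta}^q_0}^{\mathbf{X}_i^\prime\hat{\boldsymbol{\theta}}^q}\big\{F_{Y|\mathbf{X}_i}(s)-\tau\big\}\,ds,
\end{equation*}
with no Taylor remainder at all, and then bound the sum by $(\sqrt n\,\|\hat{\boldsymbol{\theta}}^q-\boldsymbol{\theta}^q_0\|^2)\cdot n^{-1}\sum_i\|\mathbf{X}_i\|^3|G_2^{(1)}(\mathbf{X}_i^\prime\boldsymbol{\theta}^e_0)|$, invoking the law of large numbers and the $(\mathcal{M}$-1$)$ moments to control the second factor. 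This makes explicit exactly where the moment conditions and the relaxed rate condition $\sqrt n\|\hat{\boldsymbol{\theta}}^q-\boldsymbol{\theta}^q_0\|^2=o_p(1)$ enter, which the paper's proof leaves implicit; it is the more careful of the two arguments. The one caveat, which you flag yourself, is that splicing the local density bound with the trivial bound into a single constant $L$ requires the neighborhood radius and the density bound in A2 to be uniform over $\mathbf{X}$; this is a standard strengthening of how A2 is read, and the paper's own proof tacitly relies on the same kind of uniformity when it sums its $O(\|\hat{\boldsymbol{\theta}}^q-\boldsymbol{\theta}^q_0\|^2)$ terms over $i$, so it is not a gap relative to the paper.
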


\begin{proof}
Recall that \begin{align*}
    \lambda_i(\boldsymbol{\theta}^q) &= E\big\{\omega(Y_i, \mathbf{X}_i, \boldsymbol{\theta}^q, \boldsymbol{\theta}^e_0 )\big | \mathbf{X}_i\big\} \\
    &= \mathbf{X}_i G_2^{(1)}(\mathbf{X}^\prime_i \boldsymbol{\theta}^e_0) \left[\mathbf{X}^\prime_i \boldsymbol{\theta}^e_0 - \mathbf{X}^\prime_i \boldsymbol{\theta}^q + \tau^{-1}E\big\{ (\mathbf{X}^\prime\boldsymbol{\theta}^q - Y_i)I(Y_i \le \mathbf{X}^\prime\boldsymbol{\theta}^q)\big| \mathbf{X}_i\big\}\right] \\
    &= \mathbf{X}_i G_2^{(1)}(\mathbf{X}^\prime_i \boldsymbol{\theta}^e_0) \left[\mathbf{X}^\prime_i \boldsymbol{\theta}^e_0 - \mathbf{X}^\prime_i \boldsymbol{\theta}^q + \tau^{-1} \mathbf{X}^\prime\boldsymbol{\theta}^q F_{Y_i|\mathbf{X}_i}(\mathbf{X}^\prime\boldsymbol{\theta}^q) - \tau^{-1}E\big\{Y_iI(Y_i \le \mathbf{X}^\prime\boldsymbol{\theta}^q)\big| \mathbf{X}_i\big\}\right].
\end{align*}
Therefore we have
\begin{align}
\label{equ:lemma3_lambdadiff_ES}
    \lambda_i(\hat{\boldsymbol{\theta}}^q) - \lambda_i(\boldsymbol{\theta}^q_0) = \mathbf{X}_i G_2^{(1)}(\mathbf{X}^\prime_i \boldsymbol{\theta}^e_0) \left\{ \mathbf{X}_i^\prime(\boldsymbol{\theta}^q_0 - \hat{\boldsymbol{\theta}}^q) + \tau^{-1} k_i(\hat{\boldsymbol{\theta}}^q, \boldsymbol{\theta}_0) + \tau^{-1} l_i(\hat{\boldsymbol{\theta}}^q, \boldsymbol{\theta}_0)\right\}, 
\end{align}
where \begin{align*}
    k_i(\hat{\boldsymbol{\theta}}^q, \boldsymbol{\theta}_0) &= \mathbf{X}^\prime_i \hat{\boldsymbol{\theta}}^q F_{Y_i|\mathbf{X}_i}(\mathbf{X}^\prime\hat{\boldsymbol{\theta}}^q) - \mathbf{X}^\prime\boldsymbol{\theta}^q_0 F_{Y_i|\mathbf{X}_i}(\mathbf{X}^\prime\boldsymbol{\theta}^q_0),\\
    l_i(\hat{\boldsymbol{\theta}}^q, \boldsymbol{\theta}_0) &= E\big\{Y_iI(Y_i \le \mathbf{X}^\prime\boldsymbol{\theta}^q_0)\big| \mathbf{X}_i\big\} - E\big\{Y_iI(Y_i \le \mathbf{X}^\prime\hat{\boldsymbol{\theta}}^q)\big| \mathbf{X}_i\big\}.
\end{align*}
By Taylor expansion we have
\begin{align*}
    k_i(\hat{\boldsymbol{\theta}}^q, \boldsymbol{\theta}_0) &= \mathbf{X}^\prime_i \hat{\boldsymbol{\theta}}^q\big\{F_{Y_i|\mathbf{X}_i}(\mathbf{X}^\prime\hat{\boldsymbol{\theta}}^q) - F_{Y_i|\mathbf{X}_i}(\mathbf{X}^\prime\boldsymbol{\theta}^q_0)\big\} + F_{Y_i|\mathbf{X}_i}(\mathbf{X}^\prime\boldsymbol{\theta}^q_0) \mathbf{X}^\prime (\hat{\boldsymbol{\theta}}^q - \boldsymbol{\theta}^q_0)\\
    &= \mathbf{X}^\prime_i \hat{\boldsymbol{\theta}}^q f_{Y_i | \mathbf{X}_i}(\mathbf{X}^\prime\boldsymbol{\theta}^q_0) \mathbf{X}^\prime(\hat{\boldsymbol{\theta}}^q - \boldsymbol{\theta}^q_0) + \tau \mathbf{X}^\prime (\hat{\boldsymbol{\theta}}^q - \boldsymbol{\theta}^q_0) + O\big(\|\hat{\boldsymbol{\theta}}^q - \boldsymbol{\theta}^q_0\|^2\big),
\end{align*}
where the first term can be written as
\begin{align*}
& \mathbf{X}^\prime_i \hat{\boldsymbol{\theta}}^q f_{Y_i | \mathbf{X}_i}(\mathbf{X}^\prime\boldsymbol{\theta}^q_0) \mathbf{X}^\prime(\hat{\boldsymbol{\theta}}^q - \boldsymbol{\theta}^q_0)\\
    &= \mathbf{X}^\prime_i \boldsymbol{\theta}^q_0 f_{Y_i | \mathbf{X}_i}(\mathbf{X}^\prime\boldsymbol{\theta}^q_0) \mathbf{X}^\prime(\hat{\boldsymbol{\theta}}^q - \boldsymbol{\theta}^q_0) + \big\{\mathbf{X}^\prime(\hat{\boldsymbol{\theta}}^q - \boldsymbol{\theta}^q_0)\big\}^2 f_{Y_i | \mathbf{X}_i}(\mathbf{X}^\prime\boldsymbol{\theta}^q_0) \\
    &= \mathbf{X}^\prime_i \boldsymbol{\theta}^q_0 f_{Y_i | \mathbf{X}_i}(\mathbf{X}^\prime\boldsymbol{\theta}^q_0) \mathbf{X}^\prime(\hat{\boldsymbol{\theta}}^q - \boldsymbol{\theta}^q_0) + O\big(\|\hat{\boldsymbol{\theta}}^q - \boldsymbol{\theta}^q_0\|^2\big).
\end{align*}
Therefore,
\begin{align}
    \label{equ:lemma3_lambdadiff_term1_ES}
    k_i(\hat{\boldsymbol{\theta}}^q, \boldsymbol{\theta}_0) = \mathbf{X}^\prime_i \boldsymbol{\theta}^q_0 f_{Y_i | \mathbf{X}_i}(\mathbf{X}^\prime\boldsymbol{\theta}^q_0) \mathbf{X}^\prime(\hat{\boldsymbol{\theta}}^q - \boldsymbol{\theta}^q_0) + \tau \mathbf{X}^\prime (\hat{\boldsymbol{\theta}}^q - \boldsymbol{\theta}^q_0) + O\big(\|\hat{\boldsymbol{\theta}}^q - \boldsymbol{\theta}^q_0\|^2\big).
\end{align}
For the second term $l_i(\hat{\boldsymbol{\theta}}^q, \boldsymbol{\theta}_0)$, notice that $E\big(Y_iI(Y_i \le \mathbf{X}^\prime\boldsymbol{\theta}^q)\big| \mathbf{X}_i\big)$ is continuously differentiable for all $\boldsymbol{\theta}^q$ in some neighborhood $U(\boldsymbol{\theta}_0^q)$ around $\boldsymbol{\theta}_0^q$, since we assume $F_{Y_i|\mathbf{X}}$ has a density which is strictly positive, continuous and bounded in this area. So $\forall \boldsymbol{\theta}^q \in U(\boldsymbol{\theta}_0^q)$, we can choose $\boldsymbol{\theta}_1^q \in U(\boldsymbol{\theta}_0^q)$ such that $\mathbf{X}_i^\prime \boldsymbol{\theta}_1^q \le \mathbf{X}_i^\prime \boldsymbol{\theta}^q$, then
   	\begin{align}
   	\label{equ:quantile_partial_detivative_ES}
   	\frac{\partial}{\partial (\boldsymbol{\theta}^q)^\prime} &E\{Y_i I(Y_i \le \mathbf{X}_i^\prime \boldsymbol{\theta}^q) | \mathbf{X}_i\} \\
   	&= \frac{\partial}{\partial (\boldsymbol{\theta}^q)^\prime} E\{Y_i I(Y_i \le \mathbf{X}_i^\prime \boldsymbol{\theta}_1^q) | \mathbf{X}_i\} + \frac{\partial}{\partial (\boldsymbol{\theta}^q)^\prime} E\{Y_i I(\mathbf{X}_i^\prime \boldsymbol{\theta}_1^q < Y_i \le \mathbf{X}_i^\prime \boldsymbol{\theta}^q) | \mathbf{X}_i\} \nonumber\\ 
   	&= \frac{\partial}{\partial (\boldsymbol{\theta}^q)^\prime} \int_{-\infty}^{\mathbf{X}_i^\prime \boldsymbol{\theta}_1^q} u \ dF_{Y_i|\mathbf{X}_i}(u) + \frac{\partial}{\partial (\boldsymbol{\theta}^q)^\prime} \int_{\mathbf{X}_i^\prime \boldsymbol{\theta}_1^q}^{\mathbf{X}_i^\prime \boldsymbol{\theta}^q} u \ dF_{Y_i|\mathbf{X}_i}(u) \nonumber\\ 
   	&= \mathbf{X}_i^\prime(\mathbf{X}_i^\prime \boldsymbol{\theta}^q) f_{Y_i|\mathbf{X}_i}(\mathbf{X}_i^\prime \boldsymbol{\theta}^q).
   	\end{align}
It then follows that \begin{align}
    \label{equ:lemma3_lambdadiff_term2_ES}
    l_i(\hat{\boldsymbol{\theta}}^q, \boldsymbol{\theta}_0) = \mathbf{X}_i^\prime \boldsymbol{\theta}^q_0 f_{Y_i | \mathbf{X}_i}(\mathbf{X}^\prime\boldsymbol{\theta}^q_0) \mathbf{X}^\prime(\boldsymbol{\theta}^q_0 - \hat{\boldsymbol{\theta}}^q) + O\big(\|\hat{\boldsymbol{\theta}}^q - \boldsymbol{\theta}^q_0\|^2\big).
\end{align}
Substituting $k_i(\hat{\boldsymbol{\theta}}^q, \boldsymbol{\theta}_0)$ and $l_i(\hat{\boldsymbol{\theta}}^q, \boldsymbol{\theta}_0)$ in \eqref{equ:lemma3_lambdadiff_ES} by \eqref{equ:lemma3_lambdadiff_term1_ES} and \eqref{equ:lemma3_lambdadiff_term2_ES}, we have
\begin{equation*}
    \lambda_i(\hat{\boldsymbol{\theta}}^q) - \lambda_i(\boldsymbol{\theta}^q_0) = O\big(\|\hat{\boldsymbol{\theta}}^q - \boldsymbol{\theta}^q_0\|^2\big).
\end{equation*}
Together with the condition $\|\hat{\boldsymbol{\theta}}^q - \boldsymbol{\theta}^q_0\|^2 = O_p(n^{-1}) = o_p(n^{-1/2})$, we can obtain that \begin{align*}
    \frac{1}{\sqrt{n}} \sum_{i=1}^n \big\{\lambda_i(\hat{\boldsymbol{\theta}}^q) - \lambda_i(\boldsymbol{\theta}^q_0)\big\} = o_p(1).
\end{align*}
\end{proof}

\begin{proof}[Proof of Theorem \ref{thm:sep_asy_normality}]
Applying the Taylor expansion, we have \begin{align}
    \mathbf{0} &= \frac{1}{n} \sum_{i=1}^n \omega_i(\hat{\boldsymbol{\theta}}^q, \hat{\boldsymbol{\theta}}^e)  \nonumber \\
    &= \frac{1}{n} \sum_{i=1}^n \omega_i(\hat{\boldsymbol{\theta}}^q, \boldsymbol{\theta}^e_0) + \frac{1}{n} \sum_{i=1}^n \frac{\partial}{\partial (\boldsymbol{\theta}^e)^\prime} \omega_i(\hat{\boldsymbol{\theta}}^q, \boldsymbol{\theta}^e) \big|_{\boldsymbol{\theta}^e = \boldsymbol{\theta}^e_0} \cdot (\boldsymbol{\theta}^e_0 - \hat{\boldsymbol{\theta}}^e) + R_n, \nonumber
\end{align}
where $\boldsymbol{\theta}^e_0$ is the true parameter vector and $R_n$ is a reminder term such that $\sqrt{n} R_n \to \mathbf{0}$ as $n \to \infty$. It then follows that \begin{align*}
    \sqrt{n} (\hat{\boldsymbol{\theta}}^e - \boldsymbol{\theta}^e_0) = &\left[\frac{1}{n} \sum_{i=1}^n \frac{\partial}{\partial (\boldsymbol{\theta}^e)^\prime} \omega_i(\hat{\boldsymbol{\theta}}^q, \boldsymbol{\theta}^e) \big|_{\boldsymbol{\theta}^e = \boldsymbol{\theta}^e_0}\right]^{-1} \\
    &\times \left[\frac{1}{\sqrt{n}} \sum_{i=1}^n \omega_i(\boldsymbol{\theta}^q_0, \boldsymbol{\theta}^e_0) + \frac{1}{\sqrt{n}} \sum_{i=1}^n \big\{\omega_i(\hat{\boldsymbol{\theta}}^q, \boldsymbol{\theta}^e_0) - \omega_i(\boldsymbol{\theta}^q_0, \boldsymbol{\theta}^e_0) \big\} + \sqrt{n} R_n\right].
\end{align*}

Combining this with Lemmas 1-3, the result of Theorem \ref{thm:sep_asy_normality} follows by
\begin{equation*}
    \sqrt{n}(\hat{\boldsymbol{\theta}}^e - \boldsymbol{\theta}^e_0) \overset{d}{\simeq} \left[E\big\{\mathbf{X} \mathbf{X}^\prime G_2^{(1)}(\mathbf{X}^\prime \boldsymbol{\theta}^e_0)\big\}\right]^{-1}  \left\{\frac{1}{\sqrt{n}} \sum_{i=1}^n \omega_i(\boldsymbol{\theta}^q_0, \boldsymbol{\theta}^e_0)\right\} = N(\mathbf{0}, \Lambda^{-1} \Omega \Lambda^{-1}).
\end{equation*}
\end{proof}

\begin{proof}[Proof of Theorem \ref{thm:joint_var_consistency}]
Based on the Taylor expansion on $\hat{\Lambda}$, and application of the asymptotic normality of $\hat{\boldsymbol{\theta}}^e$ and Slutsky's theorem, we have
 \begin{align*}
     \hat{\Lambda} &= n^{-1} \sum_{i=1}^n (\mathbf{X}_i \mathbf{X}_i^\prime) \cdot \left[G_2^{(1)} (\mathbf{X}_i^\prime \boldsymbol{\theta}^e_0) + \big\{G_2^{(1)} (\mathbf{X}_i^\prime \hat{\boldsymbol{\theta}}^e) - G_2^{(1)} (\mathbf{X}_i^\prime \boldsymbol{\theta}^e_0)\big\}\right] \\
     &= n^{-1} \sum_{i=1}^n (\mathbf{X}_i \mathbf{X}_i^\prime) \cdot G_2^{(1)} (\mathbf{X}_i^\prime \boldsymbol{\theta}^e_0) + n^{-1} \sum_{i=1}^n (\mathbf{X}_i \mathbf{X}_i^\prime) \left\{G_2^{(2)} (\mathbf{X}_i^\prime \boldsymbol{\theta}^e_0) \mathbf{X}_i^\prime (\hat{\boldsymbol{\theta}}^e - \boldsymbol{\theta}^e_0) + O\big(\|\hat{\boldsymbol{\theta}}^e - \boldsymbol{\theta}^e_0\|^2\big)\right\}\\
     & \overset{P}{\to} \Lambda.
 \end{align*}
Similar arguments yields the result for $\hat{\Omega}$.
\end{proof}

The proof of Theorem \ref{thm:score_stat_null} requires the following Lemmas 4-5.
 
 \begin{lemma}\label{lemma:score_stat_null_lemma1}
  Let $\boldsymbol{\theta}^q_0$ and $\boldsymbol{\theta}^e_{10}$ be the true parameters under $H_0$, with the assumptions in Theorem \ref{thm:score_stat_null},
  we have \begin{align*}
      \underset{\|
   		(\boldsymbol{\theta}^{q \prime}, \boldsymbol{\theta}_1^{e \prime})^\prime
   		- (\boldsymbol{\theta}^{q \prime}_0, \boldsymbol{\theta}_{10}^{e \prime})^\prime \| \le c \cdot n^{-1/2}}{\sup} \left\| S_n(\boldsymbol{\theta}^q, \boldsymbol{\theta}_1^e) - S_n(\boldsymbol{\theta}^q_0, \boldsymbol{\theta}_{10}^e) - E\big\{S_n(\boldsymbol{\theta}^q, \boldsymbol{\theta}_{1}^e) - S_n(\boldsymbol{\theta}^q_0, \boldsymbol{\theta}_{10}^e) \big| \mathbf{X}_i\big\} \right\| = o_p(1).
  \end{align*}
 \end{lemma}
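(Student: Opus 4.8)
The plan is to prove this exactly as the score-function analogue of Lemma~\ref{lemma:sep_asy_normality_lemma2}: reduce it to the stochastic equicontinuity of the centred partial-sum process built from the summands of $S_n$, and then quote the same empirical-process result. First I would write $\boldsymbol{\theta}=(\boldsymbol{\theta}^{q\prime},\boldsymbol{\theta}_1^{e\prime})^\prime$, $\boldsymbol{\theta}_0=(\boldsymbol{\theta}^{q\prime}_0,\boldsymbol{\theta}_{10}^{e\prime})^\prime$, and, using that under $H_0$ one has $\mathbf{X}_i^\prime\boldsymbol{\theta}^e=\mathbf{W}_i^\prime\boldsymbol{\theta}_1^e$,
$$
g_i(\boldsymbol{\theta})=\mathbf{Z}_i^{*}\,G_2^{(1)}(\mathbf{W}_i^\prime\boldsymbol{\theta}_1^e)\Bigl\{\mathbf{W}_i^\prime\boldsymbol{\theta}_1^e-\mathbf{X}_i^\prime\boldsymbol{\theta}^q+\tfrac{1}{\tau}(\mathbf{X}_i^\prime\boldsymbol{\theta}^q-Y_i)\,I(Y_i\le\mathbf{X}_i^\prime\boldsymbol{\theta}^q)\Bigr\},
$$
so that $S_n(\boldsymbol{\theta})=n^{-1/2}\sum_{i=1}^n g_i(\boldsymbol{\theta})$ and the quantity in the statement equals $\nu_n(\boldsymbol{\theta})-\nu_n(\boldsymbol{\theta}_0)$ with
$$
\nu_n(\boldsymbol{\theta})=\frac{1}{\sqrt n}\sum_{i=1}^n\bigl\{g_i(\boldsymbol{\theta})-E[g_i(\boldsymbol{\theta})\mid\mathbf{X}_i]\bigr\}.
$$
Throughout this lemma I would treat $\mathbf{Z}_i^{*}$ as a fixed regressor (its population version), deferring the effect of using the sample $\hat{\mathbf{Z}}_i^{*}$ --- which is $\sqrt n$-consistent under A1--A5 --- to the proof of Theorem~\ref{thm:score_stat_null}. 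The target is then $\sup_{\|\boldsymbol{\theta}-\boldsymbol{\theta}_0\|\le cn^{-1/2}}\|\nu_n(\boldsymbol{\theta})-\nu_n(\boldsymbol{\theta}_0)\|=o_p(1)$.

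The key step is to show that $g_i$ is Lipschitz in $\boldsymbol{\theta}$ on the neighbourhood with a square-integrable Lipschitz modulus. The only non-smooth ingredient is $(\mathbf{X}_i^\prime\boldsymbol{\theta}^q-Y_i)I(Y_i\le\mathbf{X}_i^\prime\boldsymbol{\theta}^q)=(\mathbf{X}_i^\prime\boldsymbol{\theta}^q-Y_i)_+$, the positive part of an affine function, which is globally Lipschitz in $\boldsymbol{\theta}^q$ with constant $\|\mathbf{X}_i\|$; combined with the affine terms, the bracket $\{\cdot\}$ in $g_i$ is then Lipschitz in $\boldsymbol{\theta}$ with modulus at most $(1+\tau^{-1})\|\mathbf{X}_i\|+\|\mathbf{W}_i\|$ and is dominated on the $cn^{-1/2}$-ball by an envelope of order $|Y_i|+\|\mathbf{X}_i\|+1$. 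Since $\mathcal{G}_2$ is three-times continuously differentiable, $G_2^{(1)}$ and $G_2^{(2)}$ are continuous, so the suprema of $|G_2^{(1)}|$ and $|G_2^{(2)}|$ over the (compact) neighbourhood are finite random bounds, and the product rule produces a Lipschitz modulus $L_i$ for $g_i$ that is a finite sum of products of $\|\mathbf{Z}_i^{*}\|$, $\|\mathbf{X}_i\|$, $\|\mathbf{W}_i\|$, $|Y_i|$ and those two suprema. I would then invoke Assumption A2 ($E(Y^2\mid\mathbf{X})<\infty$) together with the Moment Conditions ($\mathcal{M}$-1) --- in particular the finiteness of $E\{\|\mathbf{X}\|^{2r}\sup|G_2^{(1)}|^{r}\}$ for $r=1,2$ and of $E[\|\mathbf{X}\|^{2}\sup\{G_2^{(1)}\}^{2}E(Y^2\mid\mathbf{X})]$, plus the corresponding bounds carrying a $G_2^{(2)}$ factor, of the same type used for the Taylor remainders in the proofs of Lemma~\ref{lemma:sep_asy_normality_lemma1} and Theorem~\ref{thm:sep_asy_normality} --- to conclude $E(L_i^2)<\infty$ and $E\|g_i(\boldsymbol{\theta}_0)\|^2<\infty$. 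Hence $E\|g_i(\boldsymbol{\theta})-g_i(\boldsymbol{\theta}_0)\|^2\le E(L_i^2)\,\|\boldsymbol{\theta}-\boldsymbol{\theta}_0\|^2$, so the $L_2$-oscillation of the summands over the $cn^{-1/2}$-ball is $O(n^{-1/2})$.

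With a Lipschitz-in-parameter family of summands carrying a square-integrable modulus, the class $\{g_i(\cdot)-g_i(\boldsymbol{\theta}_0):\|\boldsymbol{\theta}-\boldsymbol{\theta}_0\|\le\delta\}$ has bracketing numbers growing only polynomially in $1/\varepsilon$, so the centred process $\nu_n$ is stochastically equicontinuous; applying this precisely as for Lemma~\ref{lemma:sep_asy_normality_lemma2} --- via the Remark on p.~410 of \cite{doukhan1995invariance}, or equivalently a standard maximal inequality for Lipschitz-indexed empirical processes --- gives, for every $\epsilon>0$, a $\delta_\epsilon>0$ with $\limsup_{n\to\infty}P\bigl(\sup_{\|\boldsymbol{\theta}-\boldsymbol{\theta}_0\|\le\delta_\epsilon}\|\nu_n(\boldsymbol{\theta})-\nu_n(\boldsymbol{\theta}_0)\|>\epsilon\bigr)<\epsilon$; since $cn^{-1/2}\le\delta_\epsilon$ for all large $n$, the stated supremum is $o_p(1)$. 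I expect the main obstacle to be the moment bookkeeping in the second step --- verifying that every product of covariate norms, $|Y_i|$ and derivative suprema entering $L_i$ and $g_i(\boldsymbol{\theta}_0)$, including the $G_2^{(2)}$ terms arising from differentiating $G_2^{(1)}(\mathbf{W}_i^\prime\boldsymbol{\theta}_1^e)$, is dominated by a moment that ($\mathcal{M}$-1) and A2 guarantee to be finite --- together with the conceptual point of freezing $\mathbf{Z}_i^{*}$ at its population value here so that the index randomness is confined to $(\boldsymbol{\theta}^q,\boldsymbol{\theta}_1^e)$, leaving the $\hat{\mathbf{Z}}_i^{*}$-for-$\mathbf{Z}_i^{*}$ substitution to the proof of the theorem.
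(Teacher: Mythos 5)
Your proposal follows essentially the same route as the paper: both reduce the claim to stochastic equicontinuity of the centred process $S_n(\boldsymbol{\theta}^q,\boldsymbol{\theta}_1^e)-E\{S_n(\boldsymbol{\theta}^q,\boldsymbol{\theta}_1^e)\mid\mathbf{X}_i\}$ and invoke the same empirical-process machinery used for Lemma~\ref{lemma:sep_asy_normality_lemma2} (the paper cites the Remark on p.~410 of \cite{doukhan1995invariance} and leaves the verification implicit). Your write-up simply supplies the Lipschitz-modulus and moment bookkeeping that the paper omits, and is correct; the only cosmetic difference is that the paper freezes the weight $G_2^{(1)}(\mathbf{X}_i^\prime\boldsymbol{\theta}_0^e)$ at the true value in its definition of $S_n(\boldsymbol{\theta}^q,\boldsymbol{\theta}_1^e)$, whereas you let it vary with $\boldsymbol{\theta}_1^e$, which is strictly harder and subsumes the paper's case.
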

 
 \begin{proof}
Applying similar arguments in the proof of Lemma \ref{lemma:sep_asy_normality_lemma2}, the result follows by the stochastic equicontinuity of $S_n(\boldsymbol{\theta}^q, \boldsymbol{\theta}_1^e) - E\big\{S_n(\boldsymbol{\theta}^q, \boldsymbol{\theta}_1^e) \big| \mathbf{X}_i\big\}$.
\end{proof}
 
 \begin{lemma}\label{lemma:score_stat_null_lemma2}
  With the assumptions in Theorem \ref{thm:score_stat_null}, under $H_0$, we have 
  \begin{align*}
      E\big\{S_n(\hat{\boldsymbol{\theta}}^q, \hat{\boldsymbol{\theta}}_1^e) - S_n(\boldsymbol{\theta}^q_0, \boldsymbol{\theta}_{10}^e) \big| \mathbf{X}_i\big\} = o_p(1).
  \end{align*}
 \end{lemma}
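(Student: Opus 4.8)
The plan is to condition on the covariates, reduce the statement to a second-order Taylor expansion of the conditional-mean part of $S_n$, and then exploit two exact cancellations: one coming from the quantile identification condition $F_{Y\mid\mathbf{X}}(\mathbf{X}^\prime\boldsymbol{\theta}^q_0)=\tau$ (local robustness in the $\boldsymbol{\theta}^q$ direction), and one coming from the weighted orthogonalization built into $\Pi_Z^*$ (which annihilates the $\boldsymbol{\theta}_1^e$ direction).

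\emph{Set-up.} First I would introduce $\bar{S}_n(\boldsymbol{\theta}^q,\boldsymbol{\theta}_1^e)=E\{S_n(\boldsymbol{\theta}^q,\boldsymbol{\theta}_1^e)\mid\mathbf{X}_1,\dots,\mathbf{X}_n\}$, treating the two explicit arguments as deterministic while holding the weights $G_2^{(1)}(\mathbf{X}_i^\prime\hat{\boldsymbol{\theta}}^e)$ and $\hat{\mathbf{Z}}_i^*$ at their estimated values; the claim is then $\bar{S}_n(\hat{\boldsymbol{\theta}}^q,\hat{\boldsymbol{\theta}}_1^e)-\bar{S}_n(\boldsymbol{\theta}^q_0,\boldsymbol{\theta}_{10}^e)=o_p(1)$. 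Since the sample is i.i.d.,
\[
\bar{S}_n(\boldsymbol{\theta}^q,\boldsymbol{\theta}_1^e)=n^{-1/2}\sum_{i=1}^n\hat{\mathbf{Z}}_i^*\,G_2^{(1)}(\mathbf{X}_i^\prime\hat{\boldsymbol{\theta}}^e)\,m_i(\boldsymbol{\theta}^q,\boldsymbol{\theta}_1^e),\quad m_i(\boldsymbol{\theta}^q,\boldsymbol{\theta}_1^e)=\mathbf{W}_i^\prime\boldsymbol{\theta}_1^e-\mathbf{X}_i^\prime\boldsymbol{\theta}^q+\tfrac{1}{\tau}E\{(\mathbf{X}_i^\prime\boldsymbol{\theta}^q-Y_i)I(Y_i\le\mathbf{X}_i^\prime\boldsymbol{\theta}^q)\mid\mathbf{X}_i\}.
\]
Under $H_0$, $\hat{\boldsymbol{\theta}}^q$ is $\sqrt{n}$-consistent by Assumption A4, and $\hat{\boldsymbol{\theta}}_1^e$ is $\sqrt{n}$-consistent for $\boldsymbol{\theta}_{10}^e$ by Theorem \ref{thm:sep_asy_normality} applied to the (correctly specified) restricted model, so it suffices to control the displayed difference uniformly over $\boldsymbol{\theta}^q,\boldsymbol{\theta}_1^e$ within $cn^{-1/2}$ of $\boldsymbol{\theta}^q_0,\boldsymbol{\theta}_{10}^e$.

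\emph{Expansion and cancellation.} Next I would check $m_i(\boldsymbol{\theta}^q_0,\boldsymbol{\theta}_{10}^e)=0$: under $H_0$, $\mathbf{W}_i^\prime\boldsymbol{\theta}_{10}^e=\mathbf{X}_i^\prime\boldsymbol{\theta}_0^e=ES_\tau(Y_i\mid\mathbf{X}_i)$, and $E\{Y_iI(Y_i\le Q_\tau(Y_i\mid\mathbf{X}_i))\mid\mathbf{X}_i\}=\tau\,ES_\tau(Y_i\mid\mathbf{X}_i)$ gives $\tfrac{1}{\tau}E\{(\mathbf{X}_i^\prime\boldsymbol{\theta}^q_0-Y_i)I(Y_i\le\mathbf{X}_i^\prime\boldsymbol{\theta}^q_0)\mid\mathbf{X}_i\}=Q_\tau(Y_i\mid\mathbf{X}_i)-ES_\tau(Y_i\mid\mathbf{X}_i)$, so the three pieces cancel and $\bar{S}_n(\boldsymbol{\theta}^q_0,\boldsymbol{\theta}_{10}^e)=\mathbf{0}$. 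Now $m_i$ is linear in $\boldsymbol{\theta}_1^e$ with slope $\mathbf{W}_i$; its $\boldsymbol{\theta}^q$-gradient equals $-\mathbf{X}_i+\tfrac{1}{\tau}\mathbf{X}_iF_{Y\mid\mathbf{X}}(\mathbf{X}_i^\prime\boldsymbol{\theta}^q)$ by the differentiation-under-the-integral computation used for Lemma \ref{lemma:sep_asy_normality_lemma3} (see \eqref{equ:quantile_partial_detivative_ES}), which vanishes at $\boldsymbol{\theta}^q_0$ because $F_{Y\mid\mathbf{X}}(\mathbf{X}_i^\prime\boldsymbol{\theta}^q_0)=\tau$; and its $\boldsymbol{\theta}^q$-Hessian $\tfrac{1}{\tau}f_{Y\mid\mathbf{X}}(\mathbf{X}_i^\prime\boldsymbol{\theta}^q)\mathbf{X}_i\mathbf{X}_i^\prime$ is controlled near $\boldsymbol{\theta}^q_0$ by A2. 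A second-order Taylor expansion then gives, uniformly on the shrinking ball,
\[
m_i(\boldsymbol{\theta}^q,\boldsymbol{\theta}_1^e)=\mathbf{W}_i^\prime(\boldsymbol{\theta}_1^e-\boldsymbol{\theta}_{10}^e)+O\!\big(\|\mathbf{X}_i\|^2\,\|\boldsymbol{\theta}^q-\boldsymbol{\theta}^q_0\|^2\big).
\]
Substituting this back, the $(\boldsymbol{\theta}_1^e-\boldsymbol{\theta}_{10}^e)$ term is premultiplied by $\sum_i\hat{\mathbf{Z}}_i^*G_2^{(1)}(\mathbf{X}_i^\prime\hat{\boldsymbol{\theta}}^e)\mathbf{W}_i^\prime=\Pi_Z^{*\prime}G\Pi_W$ (with $G$ at $\hat{\boldsymbol{\theta}}^e$), which is \emph{identically zero} by the construction $\Pi_Z^*=(I-P)\Pi_Z$ in \eqref{equ:ortho_score}, since $\Pi_Z^{*\prime}G\Pi_W=\Pi_Z^\prime\{G\Pi_W-G\Pi_W(\Pi_W^\prime G\Pi_W)^{-1}\Pi_W^\prime G\Pi_W\}=\mathbf{0}$ — this is exactly the first-order bias cancellation the text attributes to the orthogonal transformation. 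Only the Taylor remainder remains, and at $\boldsymbol{\theta}^q=\hat{\boldsymbol{\theta}}^q$ with $\|\hat{\boldsymbol{\theta}}^q-\boldsymbol{\theta}^q_0\|^2=O_p(n^{-1})$ (A4) and $n^{-1}\sum_i\|\hat{\mathbf{Z}}_i^*\|\,|G_2^{(1)}(\mathbf{X}_i^\prime\hat{\boldsymbol{\theta}}^e)|\,\|\mathbf{X}_i\|^2=O_p(1)$ (law of large numbers, consistency of $\hat{\boldsymbol{\theta}}^e$, the moment conditions $(\mathcal{M}\text{-}1)$), this remainder is $n^{1/2}\cdot O_p(1)\cdot O_p(n^{-1})=o_p(1)$.

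\emph{Main obstacle.} The delicate point is not the routine bookkeeping but recognizing that the leading term in the $\boldsymbol{\theta}_1^e$ direction is \emph{exactly} zero, not just asymptotically negligible: this is what forces the projection weight used to form $\hat{\mathbf{Z}}_i^*$ to coincide with the weight $G_2^{(1)}(\mathbf{X}_i^\prime\hat{\boldsymbol{\theta}}^e)$ that enters $S_n$, so that $\Pi_Z^{*\prime}G\Pi_W=\mathbf{0}$ holds on the nose. The remaining analytic effort is the uniform bound on the second-order remainder of $m_i$ over the $cn^{-1/2}$-ball, which — through the derivative identity from Lemma \ref{lemma:sep_asy_normality_lemma3} — reduces to continuity and local boundedness of $f_{Y\mid\mathbf{X}}$ together with a routine truncation of large $\|\mathbf{X}_i\|$; everything after that is Slutsky's theorem and the law of large numbers.
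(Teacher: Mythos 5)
Your proposal is correct and follows essentially the same route as the paper: the paper's proof also reduces the conditional-mean difference to the term $\bigl\{n^{-1}\sum_i \mathbf{Z}_i^* G_2^{(1)}(\mathbf{X}_i^\prime\boldsymbol{\theta}_0^e)\mathbf{W}_i^\prime\bigr\}\sqrt{n}(\hat{\boldsymbol{\theta}}_1^e-\boldsymbol{\theta}_{10}^e)$ plus an $O_p(\|\hat{\boldsymbol{\theta}}^q-\boldsymbol{\theta}_0^q\|^2)$ remainder, and kills it via $\Pi_Z^{*\prime}G\Pi_W=\mathbf{0}$. Your "gradient of $m_i$ vanishes at $\boldsymbol{\theta}_0^q$ because $F_{Y|\mathbf{X}}(\mathbf{X}^\prime\boldsymbol{\theta}_0^q)=\tau$" is just a repackaging of the paper's cancellation between the expansions of $k_i$ and $l_i$ carried over from Lemma \ref{lemma:sep_asy_normality_lemma3}.
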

 
 \begin{proof}
 The difference can be written as \begin{align*}
     &E\big\{S_n(\hat{\boldsymbol{\theta}}^q, \hat{\boldsymbol{\theta}}_1^e) - S_n(\boldsymbol{\theta}^q_0, \boldsymbol{\theta}_{10}^e) \big| \mathbf{X}_i\big\} \\
     &= \frac{1}{\sqrt{n}} \sum_{i=1}^n \mathbf{Z}_i^* G_2^{(1)}(\mathbf{X}_i^\prime \boldsymbol{\theta}_{0}^e) \left\{\mathbf{W}_i^\prime (\hat{\boldsymbol{\theta}}^e_1 - \boldsymbol{\theta}^e_{10}) - \mathbf{X}_i^\prime(\hat{\boldsymbol{\theta}}^q - \boldsymbol{\theta}^q_0) + \tau^{-1}k_i(\hat{\boldsymbol{\theta}}^q, \boldsymbol{\theta}_0) + \tau^{-1} l_i(\hat{\boldsymbol{\theta}}^q, \boldsymbol{\theta}_0)\right\},
 \end{align*}
 where the functions $k_i$ and $l_i$ are defined in \eqref{equ:lemma3_lambdadiff_ES}. Based on \eqref{equ:lemma3_lambdadiff_term1_ES} and \eqref{equ:lemma3_lambdadiff_term2_ES}, the difference can be simplified to \begin{align*}
     &E\big\{S_n(\hat{\boldsymbol{\theta}}^q, \hat{\boldsymbol{\theta}}_1^e) - S_n(\boldsymbol{\theta}^q_0, \boldsymbol{\theta}_{10}^e) \big| \mathbf{X}_i\big\} \\
     &= \frac{1}{\sqrt{n}} \sum_{i=1}^n \mathbf{Z}_i^* G_2^{(1)}(\mathbf{X}_i^\prime \boldsymbol{\theta}_{0}^e) \mathbf{W}_i^\prime (\hat{\boldsymbol{\theta}}^e_1 - \boldsymbol{\theta}^e_{10}) + o_p(1)\\
     &= \left\{\frac{1}{n} \sum_{i=1}^n \mathbf{Z}_i^* G_2^{(1)}(\mathbf{X}_i^\prime \boldsymbol{\theta}_{0}^e) \mathbf{W}_i^\prime\right\} \sqrt{n}(\hat{\boldsymbol{\theta}}^e_1 - \boldsymbol{\theta}^e_{10}) + o_p(1) \\
     &= \left[\Pi_Z^\prime G \Pi_W - \Pi_Z^\prime G \Pi_W (\Pi_W^\prime G \Pi_W)^{-1} \Pi_W^\prime G \Pi_W \right] \sqrt{n}(\hat{\boldsymbol{\theta}}^e_1 - \boldsymbol{\theta}^e_{10}) + o_p(1) = o_p(1).
 \end{align*}
 The last equality holds due to the orthogonal transformation given in \eqref{equ:ortho_score}.
\end{proof}
 
\begin{proof}[Proof of Theorem \ref{thm:score_stat_null}]
 For any $(\boldsymbol{\theta}^{q \prime}, \boldsymbol{\theta}^{e \prime}_1)^\prime$ such that $\|(\boldsymbol{\theta}^{q \prime}, \boldsymbol{\theta}^{e \prime}_1)^\prime - (\boldsymbol{\theta}^{q \prime}_0, \boldsymbol{\theta}^{e \prime}_{10})^\prime\| \le c \cdot n^{-1/2}$, define \begin{align*}
     S_n(\boldsymbol{\theta}^q, \boldsymbol{\theta}^e_1) = \frac{1}{\sqrt{n}} \sum_{i=1}^n \mathbf{Z}^*_i G^{(1)}(\mathbf{X}^\prime_i \boldsymbol{\theta}^e_{0}) \left\{\mathbf{W}^\prime_i \boldsymbol{\theta}^e_1 - \mathbf{X}^\prime_i \boldsymbol{\theta}^q + \tau^{-1} (\mathbf{X}^\prime_i \boldsymbol{\theta}^q - Y_i)I_(Y_i \le \mathbf{X}^\prime_i \boldsymbol{\theta}^q)\right\}.
 \end{align*}
 Under $H_0: \boldsymbol{\theta}^e_2 = \mathbf{0}$, $\|\mathbf{Z}^*_i G^{(1)}(\mathbf{X}^\prime_i \boldsymbol{\theta}^e_{0}) - \hat{\mathbf{Z}}^*_i G^{(1)}(\mathbf{X}^\prime_i \hat{\boldsymbol{\theta}}^e)\| = O_p(\|\hat{\boldsymbol{\theta}}^e - \boldsymbol{\theta}^e_{0}\|) = O_p(n^{-1/2})$. Therefore, if we have the asymptotic normality of $S_n(\hat{\boldsymbol{\theta}}^q, \hat{\boldsymbol{\theta}}^e_1)$, then $$
 S_n - S_n(\hat{\boldsymbol{\theta}}^q, \hat{\boldsymbol{\theta}}^e_1) = O_p(n^{-1/2}) O_p(1) = o_p(1),
 $$ which implies $S_n \overset{d}{\simeq} S_n(\hat{\boldsymbol{\theta}}^q, \hat{\boldsymbol{\theta}}^e_1)$.
 
 According to Lemmas 4-5, we can obtain that $S_n \overset{d}{\simeq} S_n(\boldsymbol{\theta}_0^q, \boldsymbol{\theta}_{10}^e)$, where \begin{align*}
    S_n(\boldsymbol{\theta}_0^q, \boldsymbol{\theta}_{10}^e) &= \frac{1}{\sqrt{n}} \sum_{i=1}^{n} \mathbf{Z}_i^* G_2^{(1)}(\mathbf{X}_i^\prime \boldsymbol{\theta}_{0}^e) \big\{\mathbf{W}_i^\prime\boldsymbol{\theta}_{10}^e - \mathbf{X}_i^\prime \boldsymbol{\theta}_0^q + \tau^{-1} (\mathbf{X}_i^\prime \boldsymbol{\theta}_0^q - Y_i) I(Y_i \le \mathbf{X}_i^\prime \boldsymbol{\theta}_0^q)\big\} \\
   &= AN(0, \Sigma),
   \end{align*}
   and
   \begin{align*}
   \Sigma &= \frac{1}{n} \sum_{i=1}^n \left[\mathbf{Z}_i^* \mathbf{Z}_i^{*\prime} \big\{G_2^{(1)}(\mathbf{W}_i^\prime \boldsymbol{\theta}_{10}^e)\big\}^2 \frac{1}{\tau^2} \text{Var}\big\{\epsilon_i^q I(\epsilon_i^q \le 0) | \mathbf{X}_{i} \big\}\right],
   \end{align*}
   here $\epsilon_i^q = Y_i - \mathbf{X}_i^\prime \boldsymbol{\theta}_0^q$ are the quantile residuals. Similar to the arguments in the proof of Theorem \ref{thm:joint_var_consistency}, we have $\hat{\Sigma}_n(\hat{\psi}) \overset{P}{\to} \Sigma$, and therefore the desired result under $H_0$ follows.
   
   Under the local alternative $H_n: \boldsymbol{\theta}_2^e = \boldsymbol{\theta}^e_{20} / \sqrt{n}$, notice that
 \begin{align*}
     &E\big\{\mathbf{W}_i^\prime\boldsymbol{\theta}_{10}^e - \mathbf{X}_i^\prime \boldsymbol{\theta}_0^q + \tau^{-1} (\mathbf{X}_i^\prime \boldsymbol{\theta}_0^q - Y_i) I(Y_i \le \mathbf{X}_i^\prime \boldsymbol{\theta}_0^q) \big | \mathbf{X}_i\big\} \\
     &= \mathbf{W}_i^\prime\boldsymbol{\theta}_{10}^e - \mathbf{X}_i^\prime \boldsymbol{\theta}_0^q + \mathbf{X}_i^\prime \boldsymbol{\theta}_0^q F_{Y_i | \mathbf{X}_i}(\mathbf{X}_i^\prime \boldsymbol{\theta}_0^q)- \tau^{-1} E\big\{Y_i I(Y_i \le \mathbf{X}_i^\prime \boldsymbol{\theta}_0^q)\big\} \\
     &= -\mathbf{Z}_i^\prime \boldsymbol{\theta}^e_{20} / \sqrt{n}.
 \end{align*}
 The result follows since $S_n \overset{d}{\simeq} S_n(\boldsymbol{\theta}_0^q, \boldsymbol{\theta}_{10}^e)$ under $H_n$.
\end{proof}

\bibliography{ref.bib}
\end{document}